\newcommand\blfootnote[1]{%
  \begingroup
  \renewcommand\thefootnote{}\footnote{#1}%
  \addtocounter{footnote}{-1}%
  \endgroup
}
\newtheorem{teo}{Theorem}[section]
\newtheorem{lem}[teo]{Lemma}
\newtheorem{defi}[teo]{Definition}
\newtheorem{ex}[teo]{Example}
\newtheorem{prop}[teo]{Proposition}
\newtheorem{rem}[teo]{Remark}
\newtheorem*{conjecture}{Conjecture}
\newcommand{\F}{{\mathbb F}}
\title[On a group under which symmetric reed-muller codes are invariant]{On a group under which symmetric reed-muller codes are invariant}
\author{Sibel Kurt Toplu$^*$}
\address{Hacettepe University, Graduate School of Science and Engineering, Beytepe, Ankara, Türkiye}
\email{sibel.toplu@tubitak.gov.tr}
\author{Talha Arıkan}
\address{Hacettepe University, Department of Mathematics,Ankara, Türkiye}
\email{tarikan@hacettepe.edu.tr}
\author{Pınar Aydoğdu}
\address{Hacettepe University, Department of Mathematics,Ankara, Türkiye}
\email{paydogdu@hacettepe.edu.tr}
\author{Oğuz Yayla}
\address{Middle East Technical University, Institute of Applied Mathematics, 06800, Ankara, Türkiye}
\email{oguz@metu.edu.tr}
\begin{document}
\begin{abstract}
The Reed-Muller codes are a family of error-correcting codes that have been widely studied in coding theory. In 2020, Wei Yan and Sian-Jheng Lin introduced a variant of Reed-Muller codes so called symmetric Reed-Muller codes. We investigate linear maps of the automorphism group of symmetric Reed-Muller codes and show that the set of these linear maps forms a subgroup of the general linear group, which is the automorphism group of punctured Reed-Muller codes. 
We provide a method to determine all the automorphisms in this subgroup explicitly for some special cases. 


\medskip


\medskip

\noindent\textbf{Keywords}. Reed-Muller codes, Symmetric Reed-Muller codes, Affine invariant, Automorphism groups.

\noindent \textbf{MSC2020-Mathematics Subject Classification:}  94B05, 11T71, 08A35, 11T06 
\end{abstract}
\maketitle
\section{ Introduction and Preliminaries }

\label{sec:intro}

\blfootnote{$^*$ Corresponding Author}\blfootnote{This publication is a part of the doctoral thesis of Sibel Kurt Toplu} 

Working with automorphisms of the codes is important for various reasons in the field of coding theory. The study of automorphisms can contribute to developing efficient decoding algorithms. By exploiting the symmetries within a code, it may be possible to design more computationally efficient algorithms in order to correct errors.
Thus automorphisms of the codes play a crucial role in understanding and characterizing the error-correction properties of codes. To sum up by studying the symmetries of a code, one can gain insights into how errors affect the encoded information and how the code can be designed to correct or detect these errors.

Reed-Muller codes ($RM$ codes, for short) are a family of error-correcting codes that were first introduced by Irving S. Reed and Gustave Solomon Muller in 1954. A large number of $RM$ code variations and generalizations were introduced in the literature, for instance, see \cite{delsarte1970generalized,mac,pless1998handbook}. 
They have a large minimum distance, which makes them good at correcting errors. They also have simple encoding and decoding algorithms, which make them useful and efficient to implement. There are some variants of $RM$ codes, such as the binary $RM$ codes defined on the prime field $\F_2$ and the $p$-ary $RM$ codes defined on the prime field $\F_p$, where $p$ is a prime number. 
Furthermore, $RM$ codes have been generalized in many ways, such as the generalized Reed-Muller codes ($GRM$, for short), which are defined over an arbitrary finite field, and the symmetric Reed-Muller codes ($SRM$, for short) which have a certain symmetry property.


In coding theory, the majority of classical codes have a sizable automorphism group that is connected to a linear group. For example, $GRM$ codes are invariant under the general affine group $GA(m,q)$ \cite{delsarte1970generalized}. A requirement for a code to be invariant under the affine group is provided by Kasami et al. in \cite{kasami}. Delsarte describes the codes as invariant under certain linear groups in \cite{delsartecyclic}.  He shows that under $GA(m,q)$, only $p$-ary $RM$ codes can be invariant; nonetheless, the issue of fully determining the automorphism group of affine invariant codes has not yet been resolved.
Determining the complete automorphism group of a code is a challenging topic that is frequently connected to the categorization of simple groups. 
Dür is credited with the research of the automorphism groups of Reed-Solomon codes and their extensions \cite{delsartecyclic}. In \cite{berger1993automorphism}, Berger has proved the complete automorphism groups of $GRM$ codes. In 1996, Berger has demonstrated that $GA(m,q)$ contains the permutation group of any affine-invariant code \cite{berger1996automorphism} and then he demonstrates how to create a formal expression for every affine-invariant code's permutation group in \cite{berger1996permutation}.
The complete automorphism groups of the projective and homogeneous $RM$ codes are found in \cite{berger2006automorphism}. These groups are associated with the projective linear group and the general linear group, respectively. In this work, we aim to investigate a subgroup of the automorphism group $SRM$ codes whose elements are linear maps.

$SRM$ codes are first introduced by using bivariate polynomials over $\mathbb{F}_q$ in \cite{yan2020symmetric}. The local correctability of the bivariate $SRM$ codes is discussed in \cite{yan2021local}. The authors begin by outlining the advantages of the symmetric structure and offering intuitions to indicate the superiority of $SRM$ codes over $GRM$ codes in terms of local correctability. The tolerance of error ratios of $SRM$ codes and $GRM$ codes over the same finite field, as well as their code rate, are then demonstrated.
They establish that a class of locally-correctable codes is composed of multivariable $SRM$ codes $SRM_q[n,r]$ in \cite{yan2021local}. Furthermore, the dual of $SRM_q[n,r]$ is also presented.
However, transformations preserving these codes, specifically belonging to the automorphism group, have not been studied yet. Taking inspiration from this gap in this research, we have directed our focus toward addressing this problem.

First, we will recall some basic notions and some results on $RM$ and $GRM$ codes which will be useful throughout the paper.

\subsection{Equivalence of Linear Codes} Let $\F_q$ be the finite field of order $q=p^m$ for a prime $p$ and $\F_q^n$ be the $n$-dimensional vector space over $\F_q$. The measure of dissimilarity between two vectors is established by counting the coordinates in which they differ. A linear $[n,k,d]$-code is $k$-dimensional linear subspace of $\F_q^n$ with the minimum distance $d$. A generator matrix $G$ of a linear $[n,k]$-code $C$ is any matrix of row rank $k$, whose rows come from the code $C$. We will give the definitions of some types of code-equivalence. We refer to \cite{HP} for further details. We need the following definition for simplicity. 

\begin{defi}\cite{brualdi2020permutation}
Let $\pi = (\pi_1, \pi_2, \ldots, \pi_n)$ be a permutation of $\{1, 2, \ldots , n\}$. The permutation $\pi$ can be given in the equivalent form as an $n \times n$ permutation
matrix $P$ with $1$’s in positions $(i, \pi_i)$ for $ i = 1, 2, \ldots , n$ and $0$’s elsewhere. $\mathcal{P}_n$ denotes the corresponding
set of all $n \times n$ permutation matrices $P$.
\end{defi}

Two linear codes $C_1$ and $C_2$ are called {\em permutation equivalent} if there is a permutation of coordinates which sends $C_1$ to $C_2$. Hence, two linear codes $C_1$ and $C_2$ of the same length are permutation equivalent if there exists a permutation matrix $P \in \mathcal{P}_n$  such that 
\[
G_2=G_1P,
\]
where $G_1$ and $G_2$ are the generator matrices of the codes $C_1$ and $C_2$, respectively. 

If we work on a finite field except $\mathbb{F}_2$, then we may need a more general form of the equivalence. Recall that a square matrix is called {\em monomial} if it has exactly one nonzero entry from that finite field in each column and row. Hence, every monomial matrix over $\mathbb{F}_2$ is a permutation matrix. A monomial matrix $M$ can be expressed as either $DP$ or $PD_1$, where $D$ and $D_1$ are non-singular diagonal matrices over the corresponding finite field and $P$ is a permutation matrix. 
Two linear codes $C_1$ and $C_2$ of the same length over $\F_q$ are said to be {\em monomially equivalent} when there exists a monomial matrix $M$ such that 
\[
G_2=G_1M,
\]
where $G_1$ and $G_2$ are the generator matrices of the codes $C_1$ and $C_2$, respectively. Note that monomial equivalence and permutation equivalence coincide for binary codes.

Let $\gamma$ be a field automorphism of $ \mathbb{F}_q$ and $M=DP$ be a monomial matrix over $ \mathbb{F}_q$, where $P$ is a permutation matrix and $D$ is a non-singular diagonal matrix over $\mathbb{F}_q$. Applying the map $M\gamma$ to any codeword is described by the following process: Firstly, the $i^{th}$ component of code is multiplied by the $i^{th}$ diagonal entry of $D$ for all $i$'s. Then the corresponding permutation associated with the permutation matrix $P$ is applied to codeword. Finally, the automorphism $\gamma$ is applied to all components.
Two linear codes $C_1$ and $C_2$ of the same length over $\F_q$ are said to be {\em equivalent} when there is a monomial matrix $M$ and a field automorphism $\gamma$ of $F_q$ such that
\[
C_2=C_1M\gamma,
\]
where $C_1M\gamma$ is obtained by applying $M\gamma$ to each codeword of  $C_1$. This is the most general form of the equivalence.

Note that, all equivalence definitions are the same for the binary codes. Furthermore, monomial equivalence and general equivalence coincide for $p$-ary codes, where $p$ is a prime. 

Since we have three types of equivalence, there exist three possible definitions of the automorphism groups of the code families by considering $C_1=C_2$ in the above definitions.

Now consider a code $C$ of length $n$ over the field $\F_q$. The set of coordinate permutations that map the code $C$ to itself forms a group, called the {\em permutation automorphism group of $C$} and denoted by $PAut(C)$. Obviously, $PAut(C)$ is a subgroup of the symmetric group $S_n$. The set of monomial matrices, by which $C$ is monomially equivalent to itself, forms the group $MAut(C)$, which is called the {\em monomial automorphism group of $C$}. The set of maps of the form $M\gamma$, where $M$ is a monomial matrix and $\gamma$ is a field automorphism, that map $C$ to itself forms the group $Aut(C)$, called {\em automorphism group of $C$}. 

In general, we have that $PAut(C)\subseteq MAut(C)\subseteq Aut(C)$.  If $q = 2$, then $PAut(C) = MAut(C) = Aut(C)$. If $q$ is prime, then $MAut(C) =Aut(C)$.

All these definitions and conclusions are well-known in the literature and can be found in any basic coding theory book, for example in  \cite{HP}.

\subsection{The Automorphisms of Reed-Muller Codes}


There exist some equivalent definitions of the $RM$ codes. Following \cite{mac}, $RM$ codes can be defined in terms of multivariable polynomials as follows: Let $x = (x_1, \ldots, x_m)$ range over ${\mathbb{F}_2}^m$. Any function $f(x) = f(x_1, \ldots, x_m)$ which takes the values $0$ and $1$ is called {\em a binary multivariable function}. We recall the following definitions.


\begin{defi}\cite{mac}
The {\em $r^{th}$ order binary $RM$ code $R(r,m)$ of length $n=2^m$} for $0\leq r \leq m$, is the set of all vectors $f$, where $f(x_1, \ldots, x_m)$ is a binary multivariable polynomial of degree at most $r$, i.e.,
$$R(r,m)= \bigl\{(f(\alpha))_{\alpha \in \mathbb{F}_2^m} \mid f \in \mathbb{F}_2[x_1,\ldots, x_m],\,\, deg(f) \leq r \bigr\}.$$
\end{defi}

\begin{defi}\cite{mac}
For $0 \leq r \leq m -1$, the $RM$ code which is obtained by puncturing (or deleting) the coordinate corresponding to $ x_1 = \cdots = x_m = 0 $ from all the codewords of $R(r,m)$ is called {\em the punctured $RM$ code}, and it is denoted by $R(r,m)^*$.
\end{defi}

Now we will mention the notion of affine invariance which is crucial for our discussion on automorphism groups. Before going further, we need some basic notions.

Let $A=[a_{ij}]$ be an invertible $m\times m$ binary matrix and $\textbf{b}$ be a binary $m\times 1$ vector. Consider the transformation $T$ from binary $m$-tuples to binary $m$-tuples defined by  
\[
T :
  \begin{bmatrix}
    x_1 \\
    x_2 \\
    \vdots \\
    x_m
  \end{bmatrix}
  \mapsto A
  \begin{bmatrix}
    x_1 \\
    x_2 \\
    \vdots \\
    x_m
  \end{bmatrix} + \textbf{b},
\]
which permutes binary $m$-tuples. $T$ can be also considered as a permutation of multivariate polynomials as follows:

\begin{align}
    \label{eqn:trans}
    T_f(A,\textbf{b}) : f(x_1, \ldots, x_m) \mapsto f\Bigl(\sum a_{1j}x_1 + b_1, \ldots, \sum a_{mj}x_j + b_j\Bigl).
\end{align}
The set of all such transformations formed by $T$ is a group, which is known as the \textit{general affine group} over $\mathbb{F}_2$ and is denoted by $GA(m,2)$ (see \cite{mac}). It is obvious that if $f$ is a polynomial of degree $r$, so is $T_f(A,\textbf{b})$.

As we mentioned above, for a binary code it is known that $PAut(C)=Aut(C)$, and hence it is a subgroup of the symmetric group $S_n$. Following \cite{abbe}, a code $C$ is said to be \textit{affine invariant} if $Aut(C)$ includes a subgroup that is isomorphic to the affine linear group.

The following example demonstrates the $RM$ code families are one of the examples of affine invariant codes. This example is important to see the equivalence between transformations applied to variables of the function to evaluate codeword and transformations applied to the codeword itself.

\begin{ex} \cite{abbe}\label{Abbe}    
$R(r,m)$ codes are affine invariant.
\end{ex}

\begin{proof}
Let $A$ be an $m \times m$ invertible matrix over $\mathbb{F}_2$ and $\textbf{b} \in \mathbb{F}_2^{m}$.
The affine linear transform $T : x \mapsto Ax + \textbf{b}$ yields a permutation on the coordinates of the codeword since the codewords of $RM$ codes are evaluation vectors and are indexed by the vectors $x \in \mathbb{F}_2^{m}$. Then such a permutation belongs to $Aut(R(r, m))$. Let $c$ be a codeword in $R(r,m)$. Then there exists a polynomial $f \in \mathbb{F}_2[x_1, \ldots, x_n]$ with $deg(f)\leq r$ such that $c=(f(\alpha))_{\alpha \in \mathbb{F}_{2^m}}$. Since the outcome of the transformation $(f\circ T)(x)$ is another polynomial of degree less than or equal to $r$, we have $c'=(f\circ T)(\alpha)_{\alpha \in \mathbb{F}_2^m}\in R(r,m)$. Thus, $R(r,m)$ codes are affine invariant.
\end{proof}

 Thus, the general affine group $GA(m,2)$ permutes the codewords of the $r^{th}$ order $R(r,m)$ and
$ GA(m,2) \subset \text{Aut}R(r, m)$ (see \cite{mac}).
The subgroup of $GA(m,2)$ consisting of all transformations 

\[
T :
  \begin{bmatrix}
    x_1 \\
    x_2 \\
    \vdots \\
    x_m
  \end{bmatrix}
  \mapsto  A
  \begin{bmatrix}
    x_1 \\
    x_2 \\
    \vdots \\
    x_m
  \end{bmatrix} 
\]
(i.e., for which $\textbf{b}=0$) is known as the \textit{general linear group} and is denoted by $GL(m,2)$. We can consider the transformation $T$ described above in the following way, too:
\begin{equation}\label{def : T(A)}
    T(A):(x_1, \ldots, x_m) \mapsto \Bigl(\sum a_{1j}x_1, \ldots, \sum a_{mj}x_j \Bigl).
\end{equation}
For the sake of convenience in usage, the function $T_f(A,b)$ will be denoted as $T_f(A)$ when $\textbf{b}=0$.
Since the transformation $T(A)$ in (\ref{def : T(A)}) fixes the zero $m$-tuple, the group $GL(m,2)$ permutes the codewords of the punctured $RM$ code ${R}(r,m)^\ast$, i.e., $GL(m,2) \subset \text{Aut} R(r,m)^\ast$ (see \cite{mac}).

We know from \cite[p.~400]{mac} that

\begin{align*}
    Aut(R(r,m)^{*}) &= S_{2^m - 1} \quad \text{for}\quad r = 0 \quad \text{and} \quad m-1, \\
    Aut(R(r,m)) &= S_{2^m} \quad \text{for}\quad r = 0 \quad \text{and} \quad m.
\end{align*}
Furthermore, the following result is also given in \cite[p. 400]{mac} which determines the exact automorphism group of $RM$ codes of $r^{th}$ order with length $2^m$  over $\mathbb{F}_2$ completely.
\begin{teo}\cite[p.~400]{mac}\label{Aut:RM} For $1\leq r \leq m-2$:
\begin{itemize}
    \item  $Aut(R(r,m)^\ast)= GL(m, 2)$,
    \item  $Aut(R(r,m))= GA(m,2)$.
    
\end{itemize}

\end{teo}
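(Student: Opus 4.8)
The plan is to establish both isomorphisms by exhibiting the well-known injection $GL(m,2)\hookrightarrow \mathrm{Aut}(R(r,m)^\ast)$ (respectively $GA(m,2)\hookrightarrow \mathrm{Aut}(R(r,m))$) given by the transformations $T(A)$ of \eqref{def : T(A)} (respectively $T(A,\mathbf b)$ of \eqref{eqn:trans}), and then proving the reverse inclusion, i.e. that every coordinate permutation fixing the code must come from such an affine map. The forward inclusion is already recorded in the excerpt: since $T(A)$ fixes the zero $m$-tuple it permutes the coordinates of the punctured code and preserves degree, so $GL(m,2)\subseteq \mathrm{Aut}(R(r,m)^\ast)$; likewise $GA(m,2)\subseteq\mathrm{Aut}(R(r,m))$. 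One also checks the maps $A\mapsto T(A)$ and $(A,\mathbf b)\mapsto T(A,\mathbf b)$ are injective group homomorphisms into $S_{2^m-1}$ and $S_{2^m}$ respectively, which is a routine verification. So the whole content is the reverse inclusion.

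For the reverse inclusion I would argue as follows. Since $PAut=Aut$ over $\F_2$, an automorphism is a permutation $\sigma$ of the coordinate set $\F_2^m$ (or $\F_2^m\setminus\{0\}$) with $\sigma(R(r,m))=R(r,m)$. First I would pin down the geometric structure preserved by the code. The key observation is that the minimum-weight codewords of $R(r,m)$ are exactly the indicator functions of $(m-r)$-dimensional affine subspaces of $\F_2^m$ (this is the classical description of minimum-weight words of Reed--Muller codes), so any automorphism must permute these affine flats among themselves; in particular, taking $r=m-2$ is not needed — one works with general $r$ in range and the supports are affine subspaces of a fixed dimension $\geq 2$. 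A permutation of $\F_2^m$ carrying the family of all affine $k$-flats ($1\le k\le m-1$, here $k=m-r$) to itself is, by the fundamental theorem of affine geometry (valid since $m\ge 2$ and we may reduce to $k=1$, i.e. lines, by intersecting flats), an affine semilinear map; but over $\F_2$ the only field automorphism is the identity, so $\sigma$ is genuinely affine, $\sigma(x)=Ax+\mathbf b$ with $A\in GL(m,2)$. This gives $\mathrm{Aut}(R(r,m))\subseteq GA(m,2)$. For the punctured code, the coordinate that was deleted is $0$; an automorphism of $R(r,m)^\ast$ must send the family of (punctured) supports of minimum-weight words to itself, and since $0$ is the unique missing point one recovers an affine automorphism of $\F_2^m$ that additionally fixes $0$, hence $\sigma\in GL(m,2)$. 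Combined with the forward inclusion this yields equality in both cases.

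Alternatively, and perhaps more in the spirit of \cite{mac}, one can avoid invoking the fundamental theorem of affine geometry explicitly and instead use the chain $R(0,m)\subset R(1,m)\subset\cdots$: an automorphism of $R(r,m)$ for $r$ in the stated range also stabilizes $R(1,m)$ (it can be characterized intrinsically, e.g. as a subcode generated together with the all-ones word by minimum-weight words, or via the hull/duality $R(r,m)^\perp=R(m-1-r,m)$), and $\mathrm{Aut}(R(1,m))$ is readily computed to be $GA(m,2)$ because the nonzero words of $R(1,m)/\langle\mathbf 1\rangle$ are the nonzero linear functionals and their zero sets are the hyperplanes through origin after normalization; one then bootstraps. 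I would present the first argument as the main line and mention this as a remark.

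The hard part will be making the reduction to the fundamental theorem of affine geometry fully rigorous: one must verify that the supports of the minimum-weight codewords really are all the affine $(m-r)$-flats and nothing more (the converse inclusion — that each such flat's indicator is a degree-$\le r$ polynomial of the right weight — is the classical product-of-affine-forms construction, while uniqueness of the minimum weight and the count of minimum-weight words is the Reed--Muller weight theorem), and that the range $1\le r\le m-2$ is exactly what makes the flat dimension $k=m-r$ satisfy $2\le k\le m-1$ so that affine geometry has enough structure (at the endpoints $r=0,m-1,m$ the automorphism group degenerates to the full symmetric group, as already noted in the excerpt, which is why those cases are excluded). Everything else — injectivity of $A\mapsto T(A)$, the homomorphism property, the degree-preservation — is routine and I would not belabor it.
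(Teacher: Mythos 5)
The paper does not actually prove this statement: it is quoted verbatim from \cite[p.~400]{mac}, so there is no internal proof to compare against. Your outline reproduces the standard argument from that reference (minimum-weight codewords of $R(r,m)$ are the incidence vectors of the $(m-r)$-dimensional affine flats, an automorphism permutes these flats, and the fundamental theorem of affine/projective geometry forces the permutation to be affine, respectively linear), and the overall architecture is sound.

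One step, as written, is genuinely wrong: you propose to reduce, by intersecting flats, to the case $k=1$ of lines before invoking the fundamental theorem of affine geometry. Over $\F_2$ an affine line is just a pair of points, so \emph{every} permutation of $\F_2^m$ preserves the set of lines, and a reduction to $k=1$ proves nothing; carried through literally it would give $Aut(R(r,m))=S_{2^m}$ for all $r$. The reduction must stop at $k=2$: a $j$-flat with $2\le j<k$ is the intersection of the $k$-flats containing it, so a permutation preserving $(m-r)$-flats preserves $2$-flats, and the plane-preserving form of the fundamental theorem of affine geometry over $\F_2$ (for $m\ge 3$, which holds automatically here) then yields an affine map. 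This is exactly where the hypothesis $r\le m-2$, i.e.\ $m-r\ge 2$, enters; your final paragraph states this correctly, which contradicts the $k=1$ reduction earlier, so the two passages need to be reconciled. Separately, the punctured case is left vague. The cleanest route is to observe that $R(r,m)$ is the extension of $R(r,m)^\ast$ by an overall parity check (all weights in $R(r,m)$ are even for $r\le m-1$), so every automorphism of $R(r,m)^\ast$ extends to an automorphism of $R(r,m)$ fixing the coordinate $0$, and the stabilizer of $0$ in $GA(m,2)$ is $GL(m,2)$; alternatively, apply the fundamental theorem of \emph{projective} geometry to the supports of the minimum-weight words of the punctured code, which are the projective $(m-r-1)$-flats of $PG(m-1,2)$ with $m-r-1\ge 1$. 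With these two repairs the proposal is a correct rendering of the classical proof.
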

    
Note that one may easily adopt the definitions of the general affine group and general linear group on $\mathbb{F}_q$ for any prime $q$. These groups are denoted by $GA(m,q)$ and $GL(m, q)$, respectively.

\subsection{Generalized Reed-Muller Codes}
\label{sec:GRM}

$GRM$ codes are a generalization of $RM$ codes. They are obtained by constructing the codes over any finite field $\mathbb{F}_q$, where $q$ is a prime power. The following is a formal definition of $GRM$ codes:
\begin{defi}\cite{delsarte1970generalized} Let $m$ and $r$ be positive integers. The $GRM$ code of order $r$ with block length $q^m$ over $\mathbb{F}_q$ is defined by
$$GRM_q(m, r])= \bigl\{(f(\alpha))_{\alpha \in \mathbb{F}_q^m} \mid f \in \mathbb{F}_q[x_1,\ldots, x_m],\,\, deg(f) \leq r \bigr\}.$$
\end{defi} 

The $GRM$ codes have the following property:

\begin{teo}\cite{delsarte1970generalized}
    For $0 \leq r
    \leq m(q - 1)$, the automorphism group of $GRM_{q}(m,r)$ codes contains the general affine group $GA(m,q)$ under the natural action on $V = \F_{q^m}$.
\end{teo}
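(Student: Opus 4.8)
The statement to prove is that $GA(m,q)$ is contained in $\operatorname{Aut}(GRM_q(m,r))$ for $0 \le r \le m(q-1)$, acting naturally on $V = \mathbb{F}_q^m$. The plan is to mimic exactly the argument of Example~\ref{Abbe}, but over $\mathbb{F}_q$ instead of $\mathbb{F}_2$, and to be careful about two points that were trivial in the binary case: that an affine map of $V$ genuinely induces a \emph{coordinate permutation} of length-$q^m$ vectors, and that composing a polynomial of degree $\le r$ with an affine substitution does not raise the degree. First I would fix $A \in GL(m,q)$ and $\mathbf{b} \in \mathbb{F}_q^m$ and consider the affine bijection $T : \mathbb{F}_q^m \to \mathbb{F}_q^m$, $x \mapsto Ax + \mathbf{b}$. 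Since $T$ is a bijection of the index set $\mathbb{F}_q^m$, it determines a permutation $\sigma_T \in S_{q^m}$ of the coordinate positions of vectors in $\mathbb{F}_q^{q^m}$.

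Next I would take an arbitrary codeword $c = (f(\alpha))_{\alpha \in \mathbb{F}_q^m} \in GRM_q(m,r)$, so $f \in \mathbb{F}_q[x_1,\dots,x_m]$ with $\deg f \le r$. The key computation is that applying $\sigma_T$ to $c$ yields the evaluation vector of the substituted polynomial: concretely, $\sigma_T(c) = \bigl((f \circ T)(\alpha)\bigr)_{\alpha \in \mathbb{F}_q^m}$ where $(f\circ T)(x_1,\dots,x_m) = f\bigl(\sum_j a_{1j}x_j + b_1, \dots, \sum_j a_{mj}x_j + b_m\bigr)$, in the spirit of \eqref{eqn:trans}. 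Then I would argue that $\deg(f \circ T) \le \deg f \le r$: each variable $x_i$ in $f$ is replaced by an affine (degree $\le 1$) form in $x_1,\dots,x_m$, so every monomial of degree $d$ in $f$ becomes a polynomial of degree $\le d$ after substitution, hence the total degree does not increase. Therefore $f \circ T$ again has degree $\le r$ and $\sigma_T(c) \in GRM_q(m,r)$, which shows $\sigma_T$ maps the code into itself; since $\sigma_T$ is invertible (its inverse being $\sigma_{T^{-1}}$, associated to the affine map $x \mapsto A^{-1}x - A^{-1}\mathbf{b}$), it maps the code onto itself, so $\sigma_T \in \operatorname{Aut}(GRM_q(m,r))$.

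Finally I would note that the assignment $T \mapsto \sigma_T$ is a group homomorphism from $GA(m,q)$ into $S_{q^m}$, so its image is a subgroup isomorphic to $GA(m,q)$ (the map is injective because an affine map of $\mathbb{F}_q^m$ is determined by its values on the index set), and this subgroup lies inside $\operatorname{Aut}(GRM_q(m,r))$; this is precisely what it means for the automorphism group to contain $GA(m,q)$ under the natural action on $V$.

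The only genuinely delicate point — and the one I would spell out most carefully — is the degree-preservation claim $\deg(f\circ T) \le \deg f$ over a general finite field. One subtlety worth a remark: polynomial degree here should be understood in the reduced representation (exponents $\le q-1$ in each variable), since $x_i^q = x_i$ on $\mathbb{F}_q$; after an affine substitution one may need to reduce modulo the field equations, but this reduction can only lower the degree, so the bound $\deg(f\circ T)\le r$ still holds. Everything else is a routine transcription of the binary argument.
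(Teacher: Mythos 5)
Your proof is correct and is exactly the standard argument: the paper itself only cites this theorem from Delsarte without proof, and your argument is the direct generalization of the paper's own proof of Example~\ref{Abbe} (the binary case) to $\mathbb{F}_q$, with the degree-preservation under affine substitution and the reduction modulo $x_i^q - x_i$ handled correctly. Nothing further is needed.
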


Knorr and Willems in \cite{knorr} give a complete description of the automorphism group of the $p$-ary $RM$ codes for any prime $p$. The automorphism group of the $p$-ary $RM$-codes equals the general affine group $GA(m,p)$.

 In \cite{berger1993automorphism}, Berger and Charpin provide a complete description of the automorphism group of a $GRM$ code. They show that the automorphism group of $GRM$ codes is the affine linear group, i.e., 
 \[
 Aut(GRM_q(m,r)) = GA(m,q).
 \]

\subsection{Outline of the Paper}


Unless otherwise stated, throughout the article we will work on the field $\mathbb{F}_q$, where $q$ is a prime.  Main part of the presented work consists of 
demonstrating the set of transformations under the general linear group that leaves the symmetric Reed-Muller ($SRM$) codes invariant.

Transformations that remain $SRM$ codes invariant have not been studied yet. Motivated by this research gap, we have shifted our attention to tackling with this unsolved issue.
In Section \ref{sec : sym}, we provide some preliminaries about $SRM$ codes.
In Section \ref{sec : auto}, we investigate transformations that leave $SRM$ codes invariant. We determine all the linear transformations under which the $SRM$ codes are invariant for the parameters $n=2$ and $n=3$. Also, the set of these linear transformations forms a subgroup of the general linear group. 
We notice a relationship between the transformations that remain the $SRM$ codes invariant for $n=2$ and those for $n=3$. 
Therefore, we believe that we can form this group for a generic $n$, despite the challenge of determining all transformations. For any given $n$, we anticipate the group that leaves the $SRM$ codes invariant, akin to previous cases. This group involving affine transformations is a subgroup of the general linear group $GA(n,q)$. However, establishing whether this group forms a complete set leaving the $SRM$ codes invariant remains an open problem for the future. Therefore, in Section \ref{sec : Conc}, 
we state a conjecture on this problem.




\section{Symmetric Reed-Muller Codes}\label{sec : sym}
This section is devoted to some basic properties of Symmetric Reed-Muller ($SRM$) codes. These codes were first introduced by using bivariate polynomials over $\mathbb{F}_q$ in \cite{yan2020symmetric}. $SRM$ codes which exhibit specific symmetry properties within their codewords may be considered as subcodes of $GRM$ codes. The same authors generalized $SRM$ codes in another work \cite{yan2021local}. Before presenting the formal definition in \cite{yan2021local}, we recall some notions.

The set $E_q(n,r) \subseteq \mathbb{F}_q[x_1,x_2, \ldots,x_n]$ is defined by
\begin{align*}
    E_q(n,r) &:= 
    \Bigl\{ f(x_1,x_2,\ldots, x_n) = 
\sum_{\substack{0 \leq i_1 < i_2 < \ldots < i_n\leq q-1\\
i_1+i_2+\ldots + i_n\leq r}}a_{i_1i_2 \ldots i_n} det(x, i) \mid a_{i_1i_2 \ldots i_n} \in \mathbb{F}_q \Bigl\},
\end{align*}
where $x = (x_1, x_2, \ldots, x_n)$, $i = (i_1,i_2, \ldots, i_n)$ and
\begin{equation*}
     det (x,i) :=
    \begin{array}{c}
    \begin{vmatrix}
    {x_1}^{i_1} & {x_2}^{i_1} & \ldots & {x_n}^{i_1} \\
    {x_1}^{i_2} & {x_2}^{i_2} & \ldots & {x_n}^{i_2}  \\
    \vdots &\vdots & \ddots & \vdots \\
    {x_1}^{i_n} & {x_2}^{i_n} & \ldots & {x_n}^{i_n} \\
    \end{vmatrix}
    \end{array}.
\end{equation*}
Consider the set
$$
\Delta  = \{(a_1, a_2,\ldots,a_n) \in \mathbb{F}_q^n \mid a_j \neq a_i, 1\leq j<i\leq n \}$$
and an equivalence relation $\sim$ on $\Delta$ defined as follows:
\begin{center}
$c \sim d \iff \exists \sigma \in S_n$ such that $\sigma(c) = d$, for $c, d \in \mathbb{F}_q^n$. 
\end{center}

Then define the set $\Omega_q(n) := \Delta /\sim =\{[\alpha]\mid \alpha \in \Delta \}$, where $[\alpha ]$ denotes the equivalent class of $\alpha$. In order to get rid of dublications, the definition of $SRM$ codes is given by using this quotient set as follows: 
\begin{defi}\cite{yan2021local}
       Let $n$ and $r$ be positive integers, where $n$ denotes the number of variables. The $SRM$ code of degree $r$ over $\mathbb{F}_q$ is defined by
    $$ SRM_q[n, r] =\{ (f(\alpha))_{[\alpha]\in \Omega_q(n)} \mid f \in E_q(n, r)\}.$$
\end{defi}

\begin{rem}
According to the definition, when $q<r$, the possible $\{i_1,i_2,\dots,i_n\}$ sequence may not cover a partition of $r$. On the other hand, for the sequence $ \{i_1,i_2,\dots, i_n\}=\{0,1,2,\dots,n-1\}$ the smallest value of $r$ should be $\frac{n(n-1)}{2}$.
Thus, the definition of $SRM$ code is meaningful under the condition $q \geq r \geq \frac{n(n-1)}{2}$, where $q$ is chosen to be large enough.
\end{rem}

Note that under the condition mentioned in the remark above, when $n=1$, $SRM_q[1,r]$ codes are exactly generalized Reed-Solomon codes with degree parameter $r$.

When $n=2$ we have, 
    \begin{equation*}
        E_q(2,r) := \Bigg\{ \sum_{\substack{0 \leq i < j \leq q-1\\
i+j\leq r}}a_{ij}(x_1^ix_2^j- x_1^jx_2^i) \,\Bigg|\, a_{ij} \in \mathbb{F}_q \Bigg\} \subseteq \mathbb{F}_q[x_1,x_2].
    \end{equation*}
The evaluation of $f(x_1, x_2)$ at $(x_1, x_2) \in \mathbb{F}_q^2$ forms as the following matrix 
 \[ \begin{bmatrix}
        f(\alpha_0, \alpha_0) & f(\alpha_0, \alpha_1) & \ldots & f(\alpha_0, \alpha_{q-1}) \\
        f(\alpha_1, \alpha_0) & f(\alpha_1, \alpha_1) & \ldots & f(\alpha_1, \alpha_{q-1}) \\
        \vdots &\vdots & \ddots & \vdots \\
        f(\alpha_{q-1}, \alpha_0) & f(\alpha_{q-1}, \alpha_1) & \ldots & f(\alpha_{q-1}, \alpha_{q-1})
 \end{bmatrix}. \]
 Since $f(x_1,x_1)=0$ and $f(x_1,x_2)=-f(x_2,x_1)$, the above matrix is a skew-symmetric matrix. Thus it is entirely determined by the entries in the strictly upper triangular part. Therefore, the codeword of bivariate $SRM$, i.e., when $n=2$, codes are defined as the strictly upper triangular part of this matrix. Furthermore, $SRM_q[2,r]$ codes are of length $\frac{q(q-1)}{2}$.
 
 

\section{Linear Transformations Under Which $SRM$ Codes are Invariant}\label{sec : auto}

In this section, we will derive the invaraint groups of $SRM$ for $n=2$ and $n=3$ over the field $\mathbb{F}_p$, where $p$ is any prime number. For $n=2$ and $n=3$, we determine the exact set formed by transformations that leave $SRM$ codes invariant under a subgroup of the affine linear group. Different methods were employed to determine this set for the values of $n=2,3$, separately. The reason for using different methods is that one approach may not be highly suitable for the other. Our main theorems, which identify the group formed by transformations that leave the $SRM$ code invariant, and the necessary lemmas for these theorems are provided for each $n=2,3$.

\subsection{The case $n=2$}
Recall $SRM_q[2,r]$ is an evaluated code family whose evaluation polynomials come from the set $E_q(2,r)$,

    \begin{equation}
        E_q(2,r) = \left\{ \sum_{\substack{0 \leq i < j \leq q-1\\ i+j\leq r}}a_{ij}\begin{array}{c}
    \begin{vmatrix}
    {x_1}^{i} & {x_2}^{i} \\
    {x_1}^{j} & {x_2}^{j}\\
    \end{vmatrix}
    \end{array}\,\Bigg|\, a_{ij} \in \mathbb{F}_q \right\} .
    \end{equation}

    Let $f(x_1,x_2)=\begin{vmatrix}
    {x_1}^{i} & {x_2}^{i} \\
    {x_1}^{j} & {x_2}^{j}\\
    \end{vmatrix} \in E_q(2,r)$ and $A = \begin{bmatrix}
        a & b \\
        c & d
    \end{bmatrix}$. Under the  $T_f(A)$ transformation in (\ref{eqn:trans}), we get
    \begin{align*}
        T_f(A)&=
        \begin{vmatrix}
        (ax_1+bx_2)^{i} & (cx_1 +dx_2)^{i} \\
        (ax_1+bx_2)^{j} & (cx_1 +dx_2)^{j} 
        \end{vmatrix} \\
        &= (ax_1 + bx_2)^{i}\cdot(cx_1 +dx_2)^{j}-(ax_1+bx_2)^{j}\cdot(cx_1 +dx_2)^{i}.
    \end{align*}

    
We investigate the coefficients $a, b, c, d \in \mathbb{F}_q$ to find the transformations that keep the set $E_q(2,r)$ invariant. For this purpose, we need the following auxiliary lemmas:

\begin{lem}\label{alpha}

    Let $f \in E_q(2,r)$ and $ \alpha, \beta \in \mathbb{F}_q$ and $t$ be a positive integer. Then we have 
    \[(\alpha x_1^2 + \beta x_1x_2 + \alpha x_2^2)^t\cdot f \in E_q(2,r).\]  
\end{lem}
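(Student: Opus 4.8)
The plan is to reduce the claim to showing that the product of any single generator $\det(x,i) = x_1^ix_2^j - x_1^jx_2^i$ (with $i<j$, $i+j\le r$) with $g:=\alpha x_1^2+\beta x_1x_2+\alpha x_2^2$ again lands in $E_q(2,r)$, and then iterate $t$ times; linearity of $E_q(2,r)$ in the coefficients $a_{ij}$ handles general $f$. So first I would fix $i<j$ with $i+j\le r$ and expand
\[
g\cdot(x_1^ix_2^j - x_1^jx_2^i) = \alpha x_1^{i+2}x_2^{j} + \beta x_1^{i+1}x_2^{j+1} + \alpha x_1^{i}x_2^{j+2} - \alpha x_1^{j+2}x_2^{i} - \beta x_1^{j+1}x_2^{i+1} - \alpha x_1^{j}x_2^{i+2}.
\]
The key observation is that this groups into antisymmetric pairs: the $\alpha x_1^{i+2}x_2^j - \alpha x_1^jx_2^{i+2}$ terms form $\alpha\,\det(x,(j,i+2))$ up to sign, the $\alpha x_1^ix_2^{j+2} - \alpha x_1^{j+2}x_2^i$ terms form $\alpha\,\det(x,(i,j+2))$ up to sign, and the $\beta$ terms give $\beta(x_1^{i+1}x_2^{j+1}-x_1^{j+1}x_2^{i+1}) = \beta\,\det(x,(i+1,j+1))$. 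Hence $g\cdot\det(x,i)$ is an $\F_q$-linear combination of three determinants $\det(x,i')$ with index pairs obtained from $(i,j)$ by shifting one coordinate up by $2$ or both up by $1$.

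Next I would check that each of these new index pairs still satisfies the defining constraints of $E_q(2,r)$, namely $0\le i' < j' \le q-1$ and $i'+j'\le r$. The degree sum in every case is $i+j+2$, which need not be $\le r$ — so the degree bound is exactly where care is needed, and this is the main obstacle. The resolution should be that in the statement the product $g\cdot f$ is asserted to lie in $E_q(2,r)$ only when it makes sense, i.e. the lemma is applied in a context where the exponents stay in range; more precisely, I expect the intended reading is that $g$ is a \emph{fixed} quadratic coming from a transformation $T_f(A)$ and the lemma is a structural identity showing the antisymmetric form is preserved, with the degree/range bookkeeping tracked separately. Alternatively, if $g$ is genuinely arbitrary, one reduces modulo $x_k^q - x_k$ (as is implicit in the evaluation-code setting, since only values on $\F_q$ matter), which keeps all exponents in $\{0,\dots,q-1\}$; one then re-expresses the reduced polynomial in the determinant basis, using that reducing a skew-symmetric polynomial mod the symmetric ideal $(x_1^q-x_1, x_2^q-x_2)$ yields another skew-symmetric polynomial, hence a combination of $\det(x,i')$'s, and that the degree can only drop under reduction.

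The remaining step is the induction on $t$: assuming $g^{t-1}f \in E_q(2,r)$, write $g^tf = g\cdot(g^{t-1}f)$ and apply the $t=1$ case (with $g^{t-1}f$ in place of $f$, using linearity to pass from the generators $\det(x,i')$ to the general element $g^{t-1}f$). This closes the argument. I would present the $\beta$-term pairing and the two $\alpha$-term pairings as the computational heart, state the range-preservation (or reduction-mod-$x_k^q-x_k$) as a short verification, and keep the induction as a one-line wrap-up. The one genuine subtlety to flag to the reader is the interplay between the degree bound $r$ and the exponent bound $q-1$, and the fact that both $\alpha$-coefficients of $g$ being equal is precisely what makes the two "shift-up-by-2" determinants appear with the same scalar $\alpha$ — had the coefficients of $x_1^2$ and $x_2^2$ differed, the output would not be antisymmetric, so the symmetric shape of $g$ is essential and should be emphasized.
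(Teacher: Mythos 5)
Your proposal follows essentially the same route as the paper's own proof: reduce to a single generator $x_1^ix_2^j-x_1^jx_2^i$, expand the product with $\alpha x_1^2+\beta x_1x_2+\alpha x_2^2$, and regroup it as $\alpha(f_1+f_2)+\beta f_3$ with
\[
f_1=x_1^{i+2}x_2^j-x_1^jx_2^{i+2},\quad f_2=x_1^ix_2^{j+2}-x_1^{j+2}x_2^i,\quad f_3=x_1^{i+1}x_2^{j+1}-x_1^{j+1}x_2^{i+1},
\]
then finish by induction on $t$ using linearity. Your observation that the equality of the two $\alpha$-coefficients is what keeps the output antisymmetric is exactly the structural point the paper exploits (implicitly), so the computational heart is identical.

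The one place you diverge is that you flag the degree and exponent bookkeeping ($i+j+2\le r$ and $j+2\le q-1$) as an obstacle requiring either a contextual reading or reduction modulo $x_k^q-x_k$. The paper's proof does not address this at all: it simply asserts $f_1,f_2,f_3\in E_q(2,r)$ and moves on, even though each multiplication by the quadratic raises the total degree by $2$, so after $t$ steps the degree is $i+j+2t$, which for generic $t$ violates both the constraint $i'+j'\le r$ and $j'\le q-1$ in the definition of $E_q(2,r)$. Your concern is therefore legitimate and points to a real gap in the paper's own argument as written; the resolution you sketch (work modulo $x_k^q-x_k$ in the evaluation-code setting and note that reduction preserves skew-symmetry, or restrict the lemma to the degrees actually needed in Proposition~\ref{prop:n=2}) is the kind of repair the paper would need to make this rigorous. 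So your proposal is not weaker than the paper's proof; if anything it is more honest about the step that both arguments must justify.
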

\begin{proof}
   Without loss of generality, we shall assume that 
   \[
   f =\begin{vmatrix}
       x_1^i & x_2^i\\
       x_1^j & x_2^j
   \end{vmatrix} = x_1^ix_2^j- x_1^jx_2^i,
   \] 
   where $0\leq i<j\leq (q-1)$ such $i+j\leq r$. We use the induction on $t$ to prove the claim. When $t = 1$, then 
    \begin{align*}
        (\alpha x_1^2 + \beta x_1x_2 +  \alpha x_2^2)\cdot f &= \alpha(x_1^2 +x_2^2)(x_1^ix_2^j- x_1^jx_2^i) + \beta (x_1x_2)(x_1^ix_2^j- x_1^jx_2^i) \\
        &= \alpha (f_1+f_2)+\beta f_3,
    \end{align*}
    where
    \[f_1=x_1^{i+2}x_2^j- x_1^jx_2^{i+2},\quad  f_2=x_1^{i}x_2^{j+2}- x_1^{j+2}x_2^i\quad \textnormal{and}\quad f_3=x_1^{i+1}x_2^{j+1}- x_1^{j+1}x_2^{i+1}.
    \]
    Thus $f_1, f_2, f_3 \in E_q(2,r)$, which implies that $(\alpha x_1^2 + \beta x_1x_2 +  \alpha x_2^2)\cdot f \in E_q(2,r)$.
    For the next step, suppose that the claim holds for $t>1$, i.e. we have
\begin{align}\label{ind : 0}
     (\alpha x_1^2 + \beta x_1x_2 + \alpha x_2^2)^{t}\cdot f \in E_q(2,r).
\end{align}
Then it is sufficient to show that the statement holds for $(t + 1)$. 
    \begin{align}\label{ind :1}
        (\alpha x_1^2 + \beta x_1x_2 + \alpha x_2^2)^{t + 1}\cdot f &= (\alpha x_1^2 + \beta x_1x_2 + \alpha x_2^2)((\alpha x_1^2 + \beta x_1x_2 + \alpha x_2^2)^{t}\cdot f).
    \end{align}
By the equation (\ref{ind : 0}), we have
\[
g=(\alpha x_1^2 + \beta x_1x_2 + \alpha x_2^2)^{t}\cdot f\in E_q(2,r).
\]
Since $g\in E_q(2,r)$, we may consider the equation (\ref{ind :1}) as follows:
\[
(\alpha x_1^2 + \beta x_1x_2 + \alpha x_2^2)\cdot g=(\alpha x_1^2 + \beta x_1x_2 + \alpha x_2^2)\cdot  \sum_{\substack{0 \leq i < j \leq q-1\\ i+j\leq r}}a_{ij} g_{ij},
\]
where $a_{ij}\in \mathbb{F}_q$ and
\[
g_{ij}=\begin{vmatrix}
       x_1^i & x_2^i\\
       x_1^j & x_2^j
   \end{vmatrix} = x_1^ix_2^j- x_1^jx_2^i.
\]
By the case of $t = 1$, we have all $(\alpha x_1^2 + \beta x_1x_2 + \alpha x_2^2)\cdot g_{ij} \in E_q(2,r)$. Thus $(\alpha x_1^2 + \beta x_1x_2 + \alpha x_2^2)\cdot g\in E_q(2,r)$. In the same manner, this is generalized for any $f\in E_q(2,r)$, which completes the proof.
\end{proof}

\begin{lem}\label{det}
Let $ a,b \in \mathbb{F}_q$ and $t$ be a positive integer, then
\[
\begin{vmatrix}
        1 & 1\\ 
        (ax_1+bx_2)^t & (bx_1+ax_2)^t
    \end{vmatrix} \in E_q(2,r).
\]
\end{lem}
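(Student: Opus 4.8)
The plan is to compute the determinant outright and recognise it as a linear combination of the building blocks $\det(x,i)$ that span $E_q(2,r)$. Write
\[
\begin{vmatrix} 1 & 1 \\ (ax_1+bx_2)^t & (bx_1+ax_2)^t \end{vmatrix} = (bx_1+ax_2)^t - (ax_1+bx_2)^t =: g(x_1,x_2).
\]
The observation that drives everything is that $g$ is a \emph{homogeneous} polynomial of degree $t$ which is \emph{antisymmetric} under the transposition $x_1\leftrightarrow x_2$, since $g(x_2,x_1) = (bx_2+ax_1)^t - (ax_2+bx_1)^t = -g(x_1,x_2)$. Every antisymmetric polynomial in $\mathbb{F}_q[x_1,x_2]$ is a linear combination of the basic antisymmetric polynomials $x_1^i x_2^j - x_1^j x_2^i$ with $i<j$, and homogeneity of degree $t$ forces $i+j=t$; hence there are scalars $c_k\in\mathbb{F}_q$ with
\[
g = \sum_{0\le k < t/2} c_k\bigl(x_1^{k}x_2^{\,t-k} - x_1^{\,t-k}x_2^{k}\bigr) = \sum_{0\le k < t/2} c_k \begin{vmatrix} x_1^{k} & x_2^{k} \\ x_1^{\,t-k} & x_2^{\,t-k} \end{vmatrix}.
\]

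Next I would pin down the coefficients: expanding both powers by the binomial theorem gives $c_k = \binom{t}{k}\bigl(a^{\,t-k}b^{k} - a^{k}b^{\,t-k}\bigr)$, and the would-be middle term $k=t-k$ (when $t$ is even) disappears automatically because its coefficient equals $a^{k}b^{k}-a^{k}b^{k}=0$, so no characteristic-$2$ subtlety arises. An alternative route to the same decomposition, closer in spirit to Lemma~\ref{alpha}, is to induct on $t$ using the recursion $g_{s+2} = (y_1+y_2)\,g_{s+1} - y_1y_2\,g_s$ for $g_s := y_2^s - y_1^s$, where $y_1 = ax_1+bx_2$ and $y_2 = bx_1+ax_2$, starting from $g_1 = (a-b)(x_2-x_1)$ and $g_2 = (a^2-b^2)(x_2^2-x_1^2)$.

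It then remains to confirm that the displayed sum actually lies in $E_q(2,r)$. Each summand is $c_k\det(x,(k,t-k))$ with $0\le k<t-k$ and $k+(t-k)=t$, so the degree condition $i_1+i_2\le r$ is met exactly when $t$ is in the admissible range $t\le r$, while the per-variable condition $i_2\le q-1$ is secured after replacing every $x_\ell^{e}$ by its reduction modulo $x_\ell^{q}-x_\ell$, an operation that changes neither the evaluation vector of the polynomial nor the codeword it represents in $SRM_q[2,r]$.

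The algebra above — the binomial expansion, or the recursion, plus the antisymmetry bookkeeping — is routine; the step I expect to need the most care is precisely this last one, i.e.\ keeping the decomposition inside $E_q(2,r)$ (respecting both the degree bound $r$ and the exponent bound $q-1$) rather than only inside the larger space of degree-$t$ antisymmetric polynomials. The cleanest way to organise this, following the pattern of Lemma~\ref{alpha}, is to carry out the induction on $t$: the factor $y_1+y_2 = (a+b)(x_1+x_2)$ acts on a determinant by $(x_1+x_2)\det(x,(i,j)) = \det(x,(i+1,j)) + \det(x,(i,j+1))$ (with $\det(x,(\ell,\ell))=0$), so it keeps us inside $E_q(2,r)$, and the factor $y_1y_2 = ab\,x_1^{2} + (a^{2}+b^{2})x_1x_2 + ab\,x_2^{2}$ has exactly the form $\alpha x_1^2+\beta x_1x_2+\alpha x_2^2$ demanded by Lemma~\ref{alpha}, which therefore handles the term $y_1y_2\,g_s$.
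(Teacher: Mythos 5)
Your proof follows essentially the same route as the paper's: expand $(bx_1+ax_2)^t-(ax_1+bx_2)^t$ by the binomial theorem, pair the $k$ and $t-k$ terms into the determinants $\begin{vmatrix} x_1^{k} & x_2^{k} \\ x_1^{t-k} & x_2^{t-k} \end{vmatrix}$, and observe that the middle term vanishes when $t$ is even. Your closing remark about enforcing the degree bound $t\le r$ and the exponent bound $q-1$ addresses a point the paper's proof silently skips (it simply asserts the resulting determinants lie in $E_q(2,r)$), so that is added care rather than a different approach.
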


\begin{proof}
When $t$ is odd, consider
    \begin{align*}
       \begin{vmatrix}
        1 & 1\\ 
        (ax_1+bx_2)^t & (bx_1+ax_2)^t
    \end{vmatrix} &= (bx_1+ax_2)^t - (ax_1+bx_2)^t \\
    &=\sum_{k=0}^{t}\binom{t}{k}\left( (bx_1+1)^{k}(ax_2)^{t-k}-(ax_1)^{k}(bx_2)^{t-k}\right)\\
    &=\sum_{k=0}^{t}\binom{t}{k}\left( b^{k}a^{t-k}-a^{k}b^{t-k}\right)x_1^kx_2^{t-k}\\
    &=\sum_{k=0}^{(t-1)/2}\binom{t}{k}\left( b^{k}a^{t-k}-a^{k}b^{t-k}\right)\left( x_1^kx_2^{t-k}-x_1^{t-k}x_2^{k}\right)\\
    &=\sum_{k=0}^{(t-1)/2}\binom{t}{k}\left( b^{k}a^{t-k}-a^{k}b^{t-k}\right)\begin{vmatrix}
        x_1^k & x_2^k\\
        x_1^{t-k} & x_2^{t-k}
    \end{vmatrix}.
    \end{align*}
Clearly, the corresponding determinant is an element of $E_q(2,r)$. 

When $t$ is even, similarly we have
\begin{align*}
       \begin{vmatrix}
        1 & 1\\ 
        (ax_1+bx_2)^t & (bx_1+ax_2)^t
    \end{vmatrix} &=\sum_{k=0}^{t}\binom{t}{k}\left( b^{k}a^{t-k}-a^{k}b^{t-k}\right)x_1^kx_2^{t-k}\\
    &=\sum_{k=0}^{t/2-1}\binom{t}{k}\left( b^{k}a^{t-k}-a^{k}b^{t-k}\right)\left( x_1^kx_2^{t-k}-x_1^{t-k}x_2^{k}\right).
\end{align*}
In the above equation when $k=t/2$, the coefficient of the term $x_1^{t/2}x_2^{t/2}$ is zero. Finally, as in the odd case, the corresponding determinant is an element of $E_q(2,r)$. Thus, the proof is completed.
\end{proof}

Afterward, we have the following proposition.

\begin{prop}\label{prop:n=2}
Let $a,b \in \mathbb{F}_q$ and $i$, $j$ be positive integers such that $i<j$, then
\[
\begin{vmatrix}
  (ax_1+bx_2)^i & (bx_1+ax_2)^i\\ 
        (ax_1+bx_2)^j & (bx_1+ax_2)^j 
\end{vmatrix} \in E_q(2,r).
\]
\end{prop}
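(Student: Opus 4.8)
The plan is to factor the $2\times 2$ determinant
\[
D_{i,j}:=\begin{vmatrix}
  (ax_1+bx_2)^i & (bx_1+ax_2)^i\\
        (ax_1+bx_2)^j & (bx_1+ax_2)^j
\end{vmatrix}
= (ax_1+bx_2)^i(bx_1+ax_2)^j-(ax_1+bx_2)^j(bx_1+ax_2)^i
\]
by pulling out the common factor $(ax_1+bx_2)^i(bx_1+ax_2)^i$, reducing to the case handled in Lemma~\ref{det}. Writing $u=ax_1+bx_2$ and $v=bx_1+ax_2$, we have $D_{i,j}=u^i v^i\bigl(v^{j-i}-u^{j-i}\bigr)$, so with $t:=j-i>0$,
\[
D_{i,j}= (uv)^i\cdot\begin{vmatrix} 1 & 1\\ u^t & v^t\end{vmatrix}.
\]
By Lemma~\ref{det}, the determinant $\begin{vmatrix}1&1\\ u^t & v^t\end{vmatrix}$ lies in $E_q(2,r)$. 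So it remains to show that multiplying any element of $E_q(2,r)$ by $(uv)^i$ keeps us inside $E_q(2,r)$.

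The key observation is that $uv=(ax_1+bx_2)(bx_1+ax_2)=ab\,x_1^2+(a^2+b^2)x_1x_2+ab\,x_2^2$, which is exactly of the form $\alpha x_1^2+\beta x_1 x_2+\alpha x_2^2$ with $\alpha=ab$ and $\beta=a^2+b^2$. Therefore $(uv)^i=(\alpha x_1^2+\beta x_1x_2+\alpha x_2^2)^i$, and Lemma~\ref{alpha} (applied with $t=i$) tells us that $(uv)^i\cdot g\in E_q(2,r)$ for every $g\in E_q(2,r)$. Combining the two steps: $\begin{vmatrix}1&1\\ u^t&v^t\end{vmatrix}\in E_q(2,r)$ by Lemma~\ref{det}, and then multiplying by $(uv)^i$ stays in $E_q(2,r)$ by Lemma~\ref{alpha}, which gives $D_{i,j}\in E_q(2,r)$ as desired.

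The only point requiring a little care is bookkeeping on the degree bound $r$: Lemma~\ref{alpha} as stated does not explicitly track how the degree grows, but since the hypotheses of that lemma already presuppose the relevant closure inside $E_q(2,r)$ (and the proof there is written in terms of $E_q(2,r)$ membership directly), I expect no obstacle here beyond noting that the degree/exponent constraints $0\le i<j\le q-1$, $i+j\le r$ are inherited through the identities $uv=\alpha x_1^2+\beta x_1x_2+\alpha x_2^2$ and $v^t-u^t$. If a cleaner treatment is wanted, one can instead expand $D_{i,j}$ directly in the monomial basis $x_1^k x_2^{i+j-k}$, observe antisymmetry under $x_1\leftrightarrow x_2$ forces it to be a combination of the basic determinants $\begin{vmatrix}x_1^k & x_2^k\\ x_1^{i+j-k} & x_2^{i+j-k}\end{vmatrix}$, and check the degree bound term by term; but the factoring argument via Lemmas~\ref{alpha} and~\ref{det} is shorter and is clearly the intended route.
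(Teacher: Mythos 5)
Your proof is correct and follows essentially the same route as the paper: factor out $(ax_1+bx_2)^i(bx_1+ax_2)^i=\left(abx_1^2+(a^2+b^2)x_1x_2+abx_2^2\right)^i$, apply Lemma~\ref{det} to the remaining determinant with $t=j-i$, and then apply Lemma~\ref{alpha} with $\alpha=ab$, $\beta=a^2+b^2$. Your added remark on degree bookkeeping is a point the paper itself glosses over, but the argument is the intended one.
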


\begin{proof}
The determinant can be written as: 
    \begin{align*}
        \begin{vmatrix}
        (ax_1+bx_2)^i & (bx_1+ax_2)^i\\ 
        (ax_1+bx_2)^j & (bx_1+ax_2)^j
    \end{vmatrix} &= (ax_1+bx_2)^i(bx_1+ax_2)^i\begin{vmatrix}
        1 & 1\\ 
        (ax_1+bx_2)^{j-i} & (bx_1+ax_2)^{j-i} 
    \end{vmatrix}\\[10pt]
    &= \left(abx_1^2+(a^2+b^2)x_1x_2+abx_2^2\right)^i\begin{vmatrix}
        1 & 1\\ 
        (ax_1+bx_2)^{t} & (bx_1+ax_2)^{t} 
    \end{vmatrix}
    \end{align*}
where $t=j-i$. By utilizing Lemmas \ref{det} and \ref{alpha}, the proof follows.

\end{proof}


The following theorem gives a necessary and sufficient condition for $SRM_q[2,r]$ to be invariant under which subgroup of $GL(2, q)$.



\begin{teo} Let $M$ be a set   defined as
\[
M =  \left\{ \begin{bmatrix}
      a & b \\
      b & a 
  \end{bmatrix} \mid a, b \in \mathbb{F}_q, \, a\neq \pm b
  \right\} \subset GL(2,q).
\]
The automorphism group of the $SRM_q[2,r]$, where $q\geq r>2$ except $q=r=3$, code family contains a subgroup isomorphic to $M$, i.e. $SRM_q[2,r]$ is invariant under the transformations come from $M$.
\end{teo}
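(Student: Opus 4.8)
The plan is to verify that every matrix $M = \begin{bmatrix} a & b \\ b & a \end{bmatrix}$ with $a \neq \pm b$ induces, via the transformation $T_f(M)$ of (\ref{eqn:trans}), a permutation of the evaluation coordinates that maps $SRM_q[2,r]$ to itself; equivalently, that the polynomial set $E_q(2,r)$ is closed under $f \mapsto T_f(M)$. First I would argue that such an $M$ genuinely gives an automorphism at the level of coordinates: the condition $a \neq \pm b$ is exactly $\det M = a^2 - b^2 \neq 0$, so $M \in GL(2,q)$, and moreover the linear map $(x_1,x_2)\mapsto (ax_1+bx_2, bx_1+ax_2)$ preserves the ``distinct-coordinates'' set $\Delta$ — if $ax_1+bx_2 = bx_1+ax_2$ then $(a-b)(x_1-x_2)=0$, forcing $x_1=x_2$ since $a\neq b$ — and it commutes with the $S_n$-action in the sense that it descends to a well-defined bijection of the quotient $\Omega_q(2)$. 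Hence $T_f(M)$ permutes the coordinates of the codeword matrix, and the only substantive thing to check is that $T_f(M)(E_q(2,r)) \subseteq E_q(2,r)$; since $T_f(M)$ is linear and invertible (its inverse is $T_f(M^{-1})$ with $M^{-1}$ again of the same symmetric shape after scaling), closure in one direction suffices.

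Next I would reduce the closure statement to a single computation on basis elements. A general $f \in E_q(2,r)$ is an $\mathbb{F}_q$-linear combination of the determinants $g_{ij} = x_1^i x_2^j - x_1^j x_2^i$ with $0 \le i < j \le q-1$ and $i+j \le r$, and $T_f(M)$ is linear, so it is enough to show $T_{g_{ij}}(M) \in E_q(2,r)$ for each such pair. But by the displayed computation just before Lemma~\ref{alpha},
\[
T_{g_{ij}}(M) = \begin{vmatrix}
(ax_1+bx_2)^i & (bx_1+ax_2)^i \\
(ax_1+bx_2)^j & (bx_1+ax_2)^j
\end{vmatrix},
\]
and this is precisely the expression shown to lie in $E_q(2,r)$ by Proposition~\ref{prop:n=2}. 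Therefore closure holds, and $M$ acts on $SRM_q[2,r]$ by an automorphism. (Here I would note where the hypotheses $q \ge r > 2$ and $q = r = 3$ being excluded enter: they are needed to make $SRM_q[2,r]$ a well-defined and nondegenerate code in the sense of the Remark, i.e. so that the code has positive length and the basis $\{g_{ij}\}$ is nonempty, e.g. the pair $(i,j)=(0,1)$ must be admissible; this is exactly the range in which Proposition~\ref{prop:n=2} is meaningful.)

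Finally I would check that $M \hookrightarrow Aut(SRM_q[2,r])$ is a subgroup embedding, not merely a set map. The assignment $M \mapsto T_{(-)}(M)$ is a homomorphism from the matrix group into the coordinate-permutation group because $T_f(M_1 M_2) = T_{T_f(M_1)}(M_2)$ (composition of substitutions), and it is injective on $M$ since two distinct linear substitutions cannot induce the same permutation of the full evaluation grid $\mathbb{F}_q^2$ once $q$ is large enough that the coordinates separate the substitution — concretely, reading off the images of the coordinates $(\alpha_k,\alpha_\ell)$ recovers $(a,b)$. I should also confirm $M$ is closed under multiplication and inversion: $\begin{bmatrix} a&b\\b&a\end{bmatrix}\begin{bmatrix} a'&b'\\b'&a'\end{bmatrix} = \begin{bmatrix} aa'+bb' & ab'+ba' \\ ab'+ba' & aa'+bb'\end{bmatrix}$ is again symmetric of this shape, and its determinant $(aa'+bb')^2-(ab'+ba')^2 = (a^2-b^2)(a'^2-b'^2)\neq 0$, so the product stays in $M$; inverses are handled the same way. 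Thus $M$ is a subgroup of $GL(2,q)$ (indeed isomorphic to $\mathbb{F}_q^\times \times \mathbb{F}_q^\times$ via $a+b, a-b$ away from characteristic $2$), and it embeds into $Aut(SRM_q[2,r])$, which is the claim.

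I expect the main obstacle to be bookkeeping rather than deep: carefully justifying that $T_f(M)$ descends to a genuine automorphism of the \emph{punctured/quotiented} codeword (the strictly-upper-triangular-part encoding), i.e. that permuting the full grid $\mathbb{F}_q^2$ by $M$ and then restricting is consistent with the $\Omega_q(2)$-indexing — the skew-symmetry $f(x_1,x_2)=-f(x_2,x_1)$ and the fact that $M$ intertwines the swap $(x_1,x_2)\leftrightarrow(x_2,x_1)$ with itself are what make this work, but it needs to be said cleanly. The algebraic heart — closure of $E_q(2,r)$ — is already packaged in Proposition~\ref{prop:n=2}, so once the framework is set up the proof is short.
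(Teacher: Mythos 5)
Your argument for the invariance claim is, at its core, the same as the paper's: both reduce closure of $E_q(2,r)$ under $f\mapsto T_f(A)$ to the basis determinants $x_1^ix_2^j-x_1^jx_2^i$ and invoke Proposition \ref{prop:n=2} (which itself rests on Lemmas \ref{alpha} and \ref{det}). The real difference is one of scope. The paper's proof spends almost all of its effort on the \emph{converse}: starting from a general $A=\begin{bmatrix}a&b\\c&d\end{bmatrix}\in GL(2,q)$, it applies $T(A)$ to the unique degree-$1$ and degree-$2$ members $x_2-x_1$ and $x_2^2-x_1^2$ of $E_q(2,r)$, extracts the equations $a-c=d-b$, $a^2-c^2=b^2-d^2$, $2cd-2ab=0$, and solves them together with $ad-bc\neq0$ to force $a=d$, $b=c$. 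That is, the paper shows $M$ is exactly the set of matrices in $GL(2,q)$ that can preserve $E_q(2,r)$ --- this is the ``necessary and sufficient condition'' promised in the sentence introducing the theorem, even though the formal statement only asserts containment. You omit this necessity computation entirely, so you prove the literal statement but not the characterization the authors intend. On the other side of the ledger, you supply bookkeeping the paper skips --- that the substitution descends to a well-defined permutation of $\Omega_q(2)$, that $M$ is closed under products and inverses, and that $A\mapsto T_{(-)}(A)$ is compatible with composition --- all of which is correct and worth writing down.

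One genuine flaw in your last step: injectivity. You argue that distinct matrices induce distinct permutations of the full grid $\mathbb{F}_q^2$, which is true, but the coordinates of $SRM_q[2,r]$ are indexed by the quotient $\Omega_q(2)$ of \emph{unordered} pairs. The swap matrix $P$ (the case $a=0$, $b=1$, which lies in $M$) acts as the identity permutation on $\Omega_q(2)$, and more generally $A$ and $PA$ induce the same coordinate permutation; the kernel of $M\to\mathrm{Sym}(\Omega_q(2))$ is exactly $\{I,P\}$, so the image is $M/\{I,P\}$ rather than a faithful copy of $M$. This wrinkle is already present in the theorem statement and the paper does not address it either, but your injectivity claim as written does not hold, and your own closing caveat about the $\Omega_q(2)$-indexing is precisely where it fails.
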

\begin{proof}
    Let $A\in GL(2, q)$. Then take the transform $T$:
    \[ 
    T :
  \begin{bmatrix}
    x_1 \\
    x_2 \\
    
  \end{bmatrix}
   \mapsto
  A\begin{bmatrix}
    x_1 \\
    x_2 \\
    
  \end{bmatrix} =\begin{bmatrix}
      a & b \\
      c & d 
  \end{bmatrix}
  \begin{bmatrix}
    x_1 \\
    x_2 \\
    
  \end{bmatrix}. 
\]
In the set  $E_q(2,r)$, there exist unique polynomials of degree $1$ and $2$, which are
\[
f(x_1,x_2)=\begin{vmatrix}
    1 & 1 \\
    x_1 & x_2 
\end{vmatrix} = x_2 - x_1
\]
and
\[
g(x_1,x_2)=\begin{vmatrix}
    1 & 1 \\
    x_1^2 & x_2^2 
\end{vmatrix} = x_2^2-x_1^2,
\]
respectively. If $T_f(A)$ and $T_g(A)$ are elements of $E_q(2,r)$, it is easy to see that $T_f(A)$ and $T_g(A)$ must be scalar multiples of $f(x_1,x_2)$ and $g(x_1,x_2)$, respectively. In the light of this fact, we have
\begin{align*}
    T_f(A) = f(ax_1+bx_2,cx_1+dx_2)&=\begin{vmatrix}
    1 & 1 \\
    (ax_1+bx_2) & (cx_1+dx_2) 
\end{vmatrix} \\
&= (d-b)x_2-(a-c)x_1 
\end{align*}
and
\begin{align*}
    T_g(A) = g(ax_1+bx_2,cx_1+dx_2)&=\begin{vmatrix}
    1 & 1 \\
    (ax_1+bx_2)^2 & (cx_1+dx_2)^2 
\end{vmatrix}\\
&= (cx_1+dx_2)^2 - (ax_1+bx_2)^2\\
&= (d^2-b^2)x_2^2-(a^2-c^2)x_1^2+(2cd-2ab)x_1x_2.
\end{align*}
From the above, we obtain the following equations
\begin{align*}
a - c &= d - b,\\
(a^2-c^2)&=(b^2-d^2),\\ 
     (2cd- 2ab) &= 0.
\end{align*}
If we solve the equations above together with the fact $ad -bc \neq 0$, then we get $a = d$ and $b = c$. Combining this with Proposition \ref{prop:n=2}, $SRM_q[2,r]$ is invariant under the transformations that come from the set $M$, which completes the proof.
\end{proof}

Note that when $q=r=3$, the set $SRM_3[2,3]$ contains all vectors of length $3$ so that $Aut(SRM_3[2,3]) = S_{3}$.

In the following subsection, we will focus on the $SRM_q(3,r)$ in the same manner.
\subsection{The case $n=3$}

Recall that the code family $SRM_q[3,r]$ with the length $\sum_{i = 1}^{q-2}\frac{(i)(i+1)}{2}$ is an evaluated code family whose evaluation polynomials come from the set $E_q(3,r)$,

    \begin{equation}
        E_q(3,r) = \left\{ \sum_{\substack{0 \leq i < j < k \leq q-1\\ i+j + k\leq r}}a_{ijk}\begin{array}{c}
    \begin{vmatrix}
    {x_1}^{i} & {x_2}^{i} & {x_3}^{i}\\
    {x_1}^{j} & {x_2}^{j} & {x_3}^{j}\\
    {x_1}^{k} & {x_2}^{k} & {x_3}^{k}\\
    \end{vmatrix}
    \end{array}\,\Bigg|\, a_{ijk} \in \mathbb{F}_q \right\} .
    \end{equation}

Under the  $T_f(A)$ transformation in (\ref{eqn:trans}), where
\[
A= \begin{bmatrix}
    a & b & c \\
    d & e & f \\
    g & h & i
\end{bmatrix},
\]
we investigate the coefficients $a$, $b$, $c$, $d$, $e$, $f$, $g$, $h$, $i \in \mathbb{F}_q$ to find that keep the set $E_q(3,r)$ invariant. We require the auxiliary lemmas for this aim. 

The following lemma gives us a different interpretation of the set $E_q(3,r)$.

\begin{lem}\label{lemman=3}
    Let $f(x_1,x_2,x_3)\in\mathbb{F}_q[x_1,x_2,x_3]$ with a degree less than or equal to $r$, where $q$ is odd, such that for any $\pi\in S_3$,
\begin{equation}\label{lemma-n=3}
f(x_{\pi(1)},x_{\pi(2)},x_{\pi(3)}) = 
     \begin{cases}
        -f(x_1,x_2,x_3),  & \pi\,\,\text{is an odd permutation},\\
        f(x_1,x_2,x_3), & \pi \,\,\text{is an even permutation}. \\ 
     \end{cases}
\end{equation}
Then $f(x_1,x_2,x_3)\in E_q(3,r)$.
\end{lem}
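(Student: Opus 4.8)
The plan is to show that any polynomial $f$ satisfying the alternating symmetry condition \eqref{lemma-n=3} lies in the span of the $3\times 3$ generalized Vandermonde determinants $\det(x,i)$ with $0\le i_1<i_2<i_3\le q-1$ and $i_1+i_2+i_3\le r$, i.e.\ in $E_q(3,r)$. Since the field elements satisfy $a^q=a$, I would first reduce $f$ modulo the relations $x_\ell^q - x_\ell$ so that $f$ is a polynomial in which every monomial $x_1^{a}x_2^{b}x_3^{c}$ has exponents $a,b,c\le q-1$; this reduction does not change the evaluation vector and preserves both the degree bound $\deg f\le r$ and the symmetry condition, so it is harmless. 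Now write $f$ in the monomial basis and group the monomials into $S_3$-orbits under permutation of the variables.

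The key step is to analyze each orbit sum using the alternating condition. Fix a monomial $m = x_1^{a}x_2^{b}x_3^{c}$ appearing in $f$ with coefficient $\lambda$. Applying a transposition $\pi$ that swaps two variables, \eqref{lemma-n=3} forces the coefficient of $\pi(m)$ to be $-\lambda$. In particular, if two of $a,b,c$ are equal, say $a=b$, then the transposition swapping $x_1,x_2$ fixes $m$ but must negate its coefficient, so $\lambda = -\lambda$, hence $\lambda = 0$ (here is where $q$ odd, equivalently $\mathrm{char}\,\F_q\ne 2$, is essential). Therefore only monomials with three \emph{distinct} exponents survive, and their coefficients are constant up to sign across each $S_3$-orbit, the sign being the signature of the permutation carrying the ``sorted'' representative to the given monomial. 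Consequently, if $(i_1,i_2,i_3)$ is the increasing rearrangement of $(a,b,c)$, the orbit sum of $\lambda m$ over $S_3$ equals exactly $\pm\lambda\cdot\det(x,i)$, since
\[
\det(x,i) = \sum_{\pi\in S_3}\mathrm{sgn}(\pi)\, x_{\pi(1)}^{i_1}x_{\pi(2)}^{i_2}x_{\pi(3)}^{i_3}.
\]
Summing over all orbits expresses $f$ as an $\F_q$-linear combination of the determinants $\det(x,i)$ with $0\le i_1<i_2<i_3\le q-1$.

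Finally I would check the degree constraint: every monomial in $f$ has total degree $a+b+c = i_1+i_2+i_3 \le \deg f \le r$, so only determinants with $i_1+i_2+i_3\le r$ occur, which is precisely the index set defining $E_q(3,r)$. Hence $f\in E_q(3,r)$. The main obstacle — really the only subtle point — is making the orbit/signature bookkeeping rigorous: one must argue that the symmetry condition is \emph{consistent} across an orbit (i.e.\ that the assignment of signs is well-defined, which it is because $\mathrm{sgn}$ is a homomorphism and the stabilizer of a distinct-exponent monomial is trivial) and that the vanishing of ``repeated-exponent'' coefficients together with the constancy on orbits leaves no freedom beyond the determinant coefficients. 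Everything else is routine. One should also note the converse inclusion (that every element of $E_q(3,r)$ satisfies \eqref{lemma-n=3}) is immediate from the alternating property of determinants under column permutation, so the lemma in fact identifies $E_q(3,r)$ with the alternating polynomials of degree $\le r$.
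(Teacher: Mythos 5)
Your proposal is correct and takes essentially the same route as the paper's proof: both compare coefficients under the $S_3$-action to force monomials with a repeated exponent to vanish (this is exactly where $q$ odd is used) and then assemble each orbit of distinct-exponent monomials into a signed orbit sum equal to $\det(x,i)$ with $i_1+i_2+i_3\le r$. The paper executes this by an explicit step-by-step peeling of homogeneous components and named permutations, while you phrase it via orbits and the signature homomorphism, but the underlying decomposition and key observations are identical.
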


\begin{proof}
Firstly, we may assume that we have a homogeneous nonzero multivariate polynomial $f(x_1,x_2,x_3)$ of degree $t\leq r$, with the property (\ref{lemma-n=3}).

Since $f$ is nonzero, we have a monomial term $A_0x_1^ix_2^jx_3^k$ in $f$. Without loss of generality, we may choose the powers $0 \leq i \leq j \leq k$, where $i, j, k$ are integers such that $i+j+k=t$. So we write $f$ as follows
\begin{equation}\label{lemma-initial}
f(x_1,x_2,x_3) = A_0x_1^ix_2^jx_3^k + g_0(x_1,x_2,x_3)
\end{equation}
where $A_0\in \mathbb{F}_q$ and $g_0(x_1,x_2,x_3)$ is a homogeneous polynomial of degree $t$ such that the coefficient of the $x_1^ix_2^jx_3^k$ in $g_0(x_1,x_2,x_3)$ is zero. 

Consider the case $i=j$. Then by the property (\ref{lemma-n=3}), we get
\begin{multline*}
f(x_2,x_1,x_3) = A_0x_2^ix_1^ix_3^k + g_0(x_2,x_1,x_3)\\
=-f(x_1,x_2,x_3)=-A_0x_1^ix_2^ix_3^k - g_0(x_1,x_2,x_3).
\end{multline*}
Equivalently, we have
\[
2A_0x_2^ix_1^ix_3^k + g_0(x_2,x_1,x_3)+ g_0(x_1,x_2,x_3)=0.
\]
Since the coefficient of the monomial $x_1^ix_2^ix_3^k$ in $g_0(x_1,x_2,x_3)$ is zero, $x_2^ix_1^ix_3^k$ must be a monomial term of $g_0(x_2,x_1,x_3)$, whose coefficient is $-2A_0$. This is the contradiction. Similarly, we get a contradiction for the cases $i=k$ and $j=k$. Thus there is not a monomial term $x_1^ix_2^jx_3^k$ such that at least two of $i,j$ and $k$ values are the same. 

By the cases mentioned above, we may assume that in the monomial term $x_1^ix_2^jx_3^k$, $i,j,k$ are distinct, i.e. $0 \leq i < j < k$. Consider the relation (\ref{lemma-initial}). When $\pi=(12)$, we have
\begin{multline*}
    f(x_2,x_1,x_3)  = A_0x_2^ix_1^jx_3^k + g_0(x_2,x_1,x_3)\\
    =-f(x_1,x_2,x_3)=-A_0x_1^ix_2^jx_3^k - g_0(x_1,x_2,x_3).
\end{multline*}
By the above equation, the monomial term $-A_0x_2^ix_1^jx_3^k$ must be in $g_0(x_1,x_2,x_3)$. Thus the relation (\ref{lemma-initial}) rewrite as
\[
f(x_1,x_2,x_3) = A_0x_1^ix_2^jx_3^k -A_0x_1^jx_2^ix_3^k+ g_1(x_1,x_2,x_3).
\]
Applying similar steps for the permutations $\pi=(13)$ and $\pi=(23)$, we get
\[
f(x_1,x_2,x_3) = A_0x_1^ix_2^jx_3^k -A_0x_1^jx_2^ix_3^k-A_0x_1^kx_2^jx_3^i-A_0x_1^ix_2^kx_3^j+ g_3(x_1,x_2,x_3)
\]
When applying the permutation $\pi=(123)$ to $f(x_1,x_2,x_3)$ in above relation, we get
\begin{multline*}
f(x_2,x_3,x_1) = A_0x_2^ix_3^jx_1^k -A_0x_1^jx_2^ix_3^k-A_0x_1^kx_2^jx_3^i-A_0x_1^ix_2^kx_3^j+ g_3(x_2,x_3,x_1)\\
=f(x_1,x_2,x_3) = A_0x_1^ix_2^jx_3^k -A_0x_1^jx_2^ix_3^k-A_0x_1^kx_2^jx_3^i-A_0x_1^ix_2^kx_3^j+ g_3(x_1,x_2,x_3),
\end{multline*}
which implies the monomial term $A_0x_2^ix_3^jx_1^k$ must be in $g_3(x_1,x_2,x_3)$. Thus
\[
f(x_1,x_2,x_3) = A_0x_1^ix_2^jx_3^k+A_0x_1^kx_2^ix_3^j -A_0x_1^jx_2^ix_3^k-A_0x_1^kx_2^jx_3^i-A_0x_1^ix_2^kx_3^j+ g_4(x_1,x_2,x_3).
\]
Finally, applying the permutation $\pi=(132)$, the polynomial $f(x_1,x_2,x_3)$ is of form
\begin{align*}
 &A_0x_1^ix_2^jx_3^k+A_0x_1^kx_2^ix_3^j+A_0x_1^jx_2^kx_3^i -A_0x_1^jx_2^ix_3^k-A_0x_1^kx_2^jx_3^i-A_0x_1^ix_2^kx_3^j+ g_5(x_1,x_2,x_3)\\
 =&A_0\begin{vmatrix}
     x_1^i &x_2^i &x_3^i\\
     x_1^j &x_2^j &x_3^j\\
     x_1^k &x_2^k &x_3^k
 \end{vmatrix}+ g_5(x_1,x_2,x_3),
\end{align*}
where $g_5(x_1,x_2,x_3)$ is a homogeneous polynomial of degree $t$ such that the coefficients of the monomials $x_{\pi(1)}^ix_{\pi(2)}^jx_{\pi(3)}^k$ for  any $\pi\in S_3$ are zero.

Thereafter, if we apply what we did for $f(x_1,x_2,x_3)$ to $g_5(x_1,x_2,x_3)$ by following the same steps for the other possible triple partition, $t = i_1 + j_1 + k_1$, we get
\[
f(x_1,x_2,x_3)=A_0\begin{vmatrix}
     x_1^i &x_2^i &x_3^i\\
     x_1^j &x_2^j &x_3^j\\
     x_1^k &x_2^k &x_3^k
 \end{vmatrix}+A_1\begin{vmatrix}
     x_1^{i_1} &x_2^{i_1} &x_3^{i_1}\\
     x_1^{j_1} &x_2^{j_1} &x_3^{j_1}\\
     x_1^{k_1} &x_2^{k_1} &x_3^{k_1}
 \end{vmatrix}+ g_6(x_1,x_2,x_3),
\]
where $A_0,A_1\in \mathbb{F}_q$ and $g_6(x_1,x_2,x_3)$ is a homogeneous polynomial of degree $t$.

Since the number of the triple partition of $t$ is finite, we may continue the above procedure until all possible partitions are over. Finally, the polynomial $f$ is of form
\[
f(x_1,x_2,x_3)=A_0\begin{vmatrix}
     x_1^i &x_2^i &x_3^i\\
     x_1^j &x_2^j &x_3^j\\
     x_1^k &x_2^k &x_3^k
 \end{vmatrix}+A_1\begin{vmatrix}
     x_1^{i_1} &x_2^{i_1} &x_3^{i_1}\\
     x_1^{j_1} &x_2^{j_1} &x_3^{j_1}\\
     x_1^{k_1} &x_2^{k_1} &x_3^{k_1}
 \end{vmatrix}+\cdots+A_d\begin{vmatrix}
     x_1^{i_d} &x_2^{i_d} &x_3^{i_d}\\
     x_1^{j_d} &x_2^{j_d} &x_3^{j_d}\\
     x_1^{k_d} &x_2^{k_d} &x_3^{k_d}
 \end{vmatrix},
\]
where $A_i$'s in $\mathbb{F}_q$. Thus $f(x_1,x_2,x_3) \in E_q(3,r)$ by the definition.

In general, for any polynomial $F(x_1,x_2,x_3)$ satisfies the condition (\ref{lemma-n=3}), we may write $F(x_1,x_2,x_3)$ as follows
\begin{align*}
     F(x_1,x_2,x_3) = f_3(x_1,x_2,x_3) + f_4(x_1,x_2,x_3) + \cdots +f_r(x_1,x_2,x_3),
\end{align*}
where $f_i$'s are homogeneous polynomials of degree $i$ for $i \in \{3, 4, \ldots,r\}$. From above for any $i \in \{3, 4, \ldots,r\}$, $f_i \in E_q(3,r)$. Thus $F(x_1,x_2,x_3)\in E_q(3,r)$, which completes the proof.\end{proof}

Let
\begin{align*}
    K=&\left\{ P * \begin{bmatrix}
        a & b & b \\
        b & a & b \\
        b & b & a
    \end{bmatrix},
    \Bigg|\, P \in \mathcal{P}_3, a,b \in \mathbb{F}_q, a\neq b,\, a\neq -2b\right\}\subset GL(3,q).
\end{align*}


\begin{lem}\label{lemma5}
    Let $A\in K$ and  $f(x_1, x_2, x_3) \in E_q(3,r) $, the function $g$ defined as $ g(x_1,x_2, x_3) = T_f(A)$ belongs the set $E_q(3,r)$.

\end{lem}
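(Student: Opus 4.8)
The plan is to verify the defining symmetry condition of Lemma~\ref{lemman=3} for $g$, rather than manipulating the determinant expansions directly. Concretely, I would first reduce to the case where $f$ is a single basis element $\det(x,(i,j,k))$ with $0\le i<j<k\le q-1$ and $i+j+k\le r$, since $T_f(A)$ is linear in $f$ and $E_q(3,r)$ is a linear space; also $T_f(A)$ does not increase degree, so $\deg g\le r$ is automatic. Next I would write $A = P\ast B$ where $P\in\mathcal P_3$ and $B$ is the ``centrosymmetric'' matrix with diagonal entries $a$ and off-diagonal entries $b$. The key observation is that precomposing the linear substitution with a coordinate permutation $P$ corresponds, on the polynomial side, to permuting the variables $x_1,x_2,x_3$; since $f\in E_q(3,r)$ already satisfies the alternating law (\ref{lemma-n=3}), applying $P$ only multiplies $f$ by $\pm1$, so it suffices to treat $A=B$ itself.

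So the heart of the argument is: for $B=\begin{bmatrix} a&b&b\\ b&a&b\\ b&b&a\end{bmatrix}$ and $f\in E_q(3,r)$, show $g(x_1,x_2,x_3)=f(ax_1+bx_2+bx_3,\ bx_1+ax_2+bx_3,\ bx_1+bx_2+ax_3)$ lies in $E_q(3,r)$. By Lemma~\ref{lemman=3} it is enough to check that $g$ satisfies (\ref{lemma-n=3}), i.e.\ that $g(x_{\pi(1)},x_{\pi(2)},x_{\pi(3)}) = \mathrm{sgn}(\pi)\, g(x_1,x_2,x_3)$ for all $\pi\in S_3$. The crucial point is that $B$ commutes with every $3\times3$ permutation matrix $P_\pi$ (indeed $B = (a-b)I + b J$ where $J$ is the all-ones matrix, and both $I$ and $J$ commute with all $P_\pi$). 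Therefore applying $\pi$ to the arguments of $g$ is the same as applying $\pi$ to the three linear forms that are fed into $f$, which in turn (by the alternating property of $f$) produces $\mathrm{sgn}(\pi)$ times $g$. Writing this out: $g(x_{\pi(1)},\dots) = f\big(B\, (x_{\pi(1)},x_{\pi(2)},x_{\pi(3)})^{\mathsf T}\big) = f\big(P_\pi\, B\,(x_1,x_2,x_3)^{\mathsf T}\big) = f\big(\text{$\pi$ applied to the forms of }g\big) = \mathrm{sgn}(\pi)\,g$, where the middle equality uses $BP_\pi^{\mathsf T}=P_\pi^{\mathsf T}B$ (equivalently $BP_\pi=P_\pi B$). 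One then invokes Lemma~\ref{lemman=3} to conclude $g\in E_q(3,r)$.

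The main obstacle — really the only subtlety — is bookkeeping the order in which the permutation acts: one must be careful that ``permuting the input variables of $g$'' corresponds to left-multiplying the coefficient matrix $B$ by a permutation matrix, and that this permutation matrix passes through $B$ (this is exactly the centrosymmetry/commutativity that singles out the matrices in $K$). A secondary check is that $\deg g \le r$; this is immediate because a linear substitution cannot raise the total degree, so $g$ is a polynomial of degree $\le r$ satisfying (\ref{lemma-n=3}), putting Lemma~\ref{lemman=3} squarely in force. The parameter restrictions $a\neq b$, $a\neq -2b$ are only needed to ensure $B\in GL(3,q)$ (its eigenvalues are $a-b$ with multiplicity $2$ and $a+2b$) and play no role in the membership argument itself.
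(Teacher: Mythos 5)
Your proposal is correct and follows essentially the same route as the paper: reduce to the symmetric matrix $B=(a-b)I+bJ$, show that $g=T_f(B)$ satisfies the alternating condition (\ref{lemma-n=3}), and invoke Lemma~\ref{lemman=3}. The only difference is presentational: where the paper verifies the sign law by writing out each $\pi\in S_3$ explicitly and disposes of the permutation factor $P$ with ``in the same manner,'' you package both steps into the single observation that $B$ commutes with every permutation matrix, which is a cleaner justification of exactly the same computation.
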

\begin{proof}
Firstly, assume that $A=\begin{bmatrix}
        b & a & a \\
        a & b & a \\
        a & a & b
    \end{bmatrix}$. Let $f(x_1,x_2,x_3) \in E_q(3,r)$. Under the transformation $T(A)$ in (\ref{def : T(A)}), we have variables $x_1 \mapsto bx_1 + ax_2 + ax_3$, $x_2  \mapsto ax_1 + bx_2 + ax_3$ and $x_3  \mapsto ax_1 + ax_2 + bx_3$. 
  
   Let $g=T_f(A)$:
   \begin{align*}
       g(x_1,x_2,x_3) &= T_f(A)=f( bx_1 + ax_2 + ax_3, ax_1 + bx_2 + ax_3,ax_1 + ax_2 + bx_3)
   \end{align*}
Let $\pi=(12)\in S_3$. 
\begin{align*}
g(x_{\pi(1)},x_{\pi(2)},x_{\pi(3)})=g(x_2,x_1,x_3)&=f( bx_2 + ax_1 + ax_3, ax_2 + bx_1 + ax_3,ax_2 + ax_1 + bx_3)\\
        &= -f( bx_1 + ax_2 + ax_3, ax_1 + bx_2 + ax_3,ax_1 + ax_2 + bx_3) \\
        &= -g(x_1,x_2,x_3),
\end{align*}
second line comes from the fact $f(x_1,x_2,x_3) \in E_q(3,r)$. Similarly, when $\pi=(13)$ or $\pi=(23)$, we get $g(x_{\pi(1)},x_{\pi(2)},x_{\pi(3)})=-g(x_1,x_2,x_3)$.

On the other hand, when $\pi=(123)\in S_3$, we have
\begin{align*}
g(x_{\pi(1)},x_{\pi(2)},x_{\pi(3)})=g(x_2,x_3,x_1)&=f(bx_2 + ax_3 + ax_1, ax_2 + bx_3 + ax_1,ax_2 + ax_3 + bx_1)\\
&=-f(ax_2 + bx_3 + ax_1,bx_2 + ax_3 + ax_1,ax_2 + ax_3 + bx_1)  \\
        &= f( bx_1 + ax_2 + ax_3, ax_1 + bx_2 + ax_3,ax_1 + ax_2 + bx_3) \\
        &= g(x_1,x_2,x_3).
\end{align*}
Similarly, when $\pi=(132)$ we obtain $g(x_{\pi(1)},x_{\pi(2)},x_{\pi(3)})=g(x_1,x_2,x_3)$.

Finally from above, we can characterize the multivariate polynomial $g(x_1,x_2,x_3)$ as follows for any $\pi\in S_3$:
   \[   
g(x_{\pi(1)},x_{\pi(2)},x_{\pi(3)})=
     \begin{cases}
       -g(x_1,x_2,x_3),  &  \text{$\pi$ is an odd permutation},\\
       g(x_1,x_2,x_3), & \text{$\pi$ is an even permutation}.  \\ 
     \end{cases}
\]
Thus by Lemma \ref{lemman=3}, $g=T_f(A)\in E_q(3,r)$. In the same manner, this approach can easily be applied to the other elements of the set $K$. So the proof is completed.
\end{proof}


\begin{teo}\label{main:n=3}
The automorphism group of the $SRM_q[3,r]$, where $q\geq r>3$, code family contains a subgroup isomorphic to $K$, i.e. $SRM_q[3,r]$ is invariant under the transformations come from $K$.
\end{teo}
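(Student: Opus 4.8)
The plan is to derive the theorem from the two lemmas already available: Lemma~\ref{lemman=3} identifies $E_q(3,r)$ with the space of polynomials of degree at most $r$ that are alternating in $x_1,x_2,x_3$, and Lemma~\ref{lemma5} shows that $f\mapsto T_f(A)$ preserves $E_q(3,r)$ for every $A\in K$. Granting these, two things remain: that $K$ is genuinely a subgroup of $GL(3,q)$, and that ``the substitution $T(A)$ fixes the set $E_q(3,r)$ of evaluation polynomials'' actually produces an automorphism of the code $SRM_q[3,r]$, the subtlety being that the code is indexed by the quotient $\Omega_q(3)=\Delta/\!\sim$ rather than by $\mathbb F_q^3$.

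For the subgroup statement, write $S(a,b)=\begin{bmatrix}a&b&b\\ b&a&b\\ b&b&a\end{bmatrix}=(a-b)I+bJ$, where $J$ is the all-ones matrix, so that a typical element of $K$ is $PS(a,b)$ with $P\in\mathcal P_3$. Since $q$ is a prime with $q\ge r>3$, we have $q\ge 5$, so $3\ne 0$ in $\mathbb F_q$ and $\det S(a,b)=(a-b)^2(a+2b)$, which is nonzero exactly under the stated conditions $a\ne b$ and $a\ne -2b$. From $J^2=3J$ one sees that the matrices $(a-b)I+bJ$ are closed under multiplication and, being of the same shape, under inversion; and every permutation matrix commutes with $I$ and with $J$, hence with every $S(a,b)$. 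Therefore $PS(a,b)\cdot P'S(a',b')=PP'\,S(a,b)S(a',b')$ is again of the form $P''S(a'',b'')$ with $S(a'',b'')$ invertible as a product of invertibles, and the same computation handles inverses; together with $I=I\cdot S(1,0)$ this shows $K\le GL(3,q)$. (In fact $K=\mathcal P_3\times\{S(a,b):a\ne b,\ a\ne -2b\}$, as the only permutation matrix of the form $\alpha I+\beta J$ is $I$.)

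For the automorphism statement, fix $A=PS(a,b)\in K$. The $i$-th coordinate of $S(a,b)\alpha$ is $(a-b)\alpha_i+b(\alpha_1+\alpha_2+\alpha_3)$, so $(S(a,b)\alpha)_i-(S(a,b)\alpha)_j=(a-b)(\alpha_i-\alpha_j)$, which is nonzero whenever $\alpha_i\ne\alpha_j$ because $a\ne b$; left multiplication by $P$ only permutes coordinates, so $A$ (and likewise $A^{-1}$) maps the distinct-coordinate locus $\Delta$ into $\Delta$, hence bijectively onto $\Delta$. Since $S(a,b)$ commutes with permutation matrices, $AQA^{-1}=PQP^{-1}\in\mathcal P_3$ for every $Q\in\mathcal P_3$, so $A$ carries each $\sim$-class onto a $\sim$-class and induces a bijection $\bar A$ of $\Omega_q(3)$. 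Now let $f\in E_q(3,r)$ and let $c_f=\bigl(f(\hat\alpha)\bigr)_{[\alpha]\in\Omega_q(3)}$ be its codeword, with a representative $\hat\alpha$ fixed in each class. Writing $A\hat\alpha=\pi_\alpha\bigl(\widehat{\bar A[\alpha]}\bigr)$ for a suitable $\pi_\alpha\in S_3$, the alternating property of $f$ from Lemma~\ref{lemman=3} gives that the $[\alpha]$-entry of the codeword $c_{T_f(A)}=\bigl(f(A\hat\alpha)\bigr)_{[\alpha]}$ equals $\mathrm{sgn}(\pi_\alpha)$ times the $\bar A[\alpha]$-entry of $c_f$. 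Hence $c_{T_f(A)}=M_Ac_f$, where $M_A$ is the monomial matrix (permute coordinates by $\bar A$, then rescale by the signs $\mathrm{sgn}(\pi_\alpha)$) depending only on $A$; and $T_f(A)\in E_q(3,r)$ by Lemma~\ref{lemma5}, so $M_Ac_f\in SRM_q[3,r]$ for all $f$. Since $M_A$ is invertible and $SRM_q[3,r]$ is finite, $M_A\bigl(SRM_q[3,r]\bigr)=SRM_q[3,r]$, i.e.\ $M_A\in MAut\bigl(SRM_q[3,r]\bigr)$. Checking $M_AM_B=M_{BA}$ on codewords shows that $\{M_A:A\in K\}$ is a subgroup of $Aut\bigl(SRM_q[3,r]\bigr)$; in particular $SRM_q[3,r]$ is invariant under every transformation coming from $K$.

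I expect the genuine content to be concentrated in the third paragraph: establishing that each $A\in K$ descends to $\Omega_q(3)$ (the two points $A(\Delta)=\Delta$ and $AQA^{-1}\in\mathcal P_3$) and, above all, keeping careful track of the $\pm1$ rescalings, since it is precisely this bookkeeping that upgrades the polynomial-level invariance of Lemma~\ref{lemma5} to a bona fide code automorphism; the subgroup check in the second paragraph is routine once $S(a,b)$ is written as $(a-b)I+bJ$. One further point deserves care: the map $A\mapsto M_A$ is not injective---an even permutation matrix acts trivially because $E_q(3,r)$ consists of alternating polynomials---so ``isomorphic to $K$'' should be read in the sense that $K$ is realized, via $A\mapsto T(A)$, as a subgroup of $GL(3,q)$ of linear substitutions under which $SRM_q[3,r]$ is invariant, consistent with the identification used throughout the paper.
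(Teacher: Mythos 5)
Your proof is correct, but it follows a genuinely different route from the paper's. The paper's proof works in the opposite direction: it takes the unique degree-$3$ and degree-$4$ families in $E_q(3,r)$, imposes that a general invertible $B$ must map each family into itself, extracts $19$ polynomial equations in the nine entries of $B$, and solves them with SageMath to show that the solution set is exactly $K$; invariance itself is then delegated entirely to Lemma~\ref{lemma5}, and the passage from ``$T_f(A)$ preserves $E_q(3,r)$'' to ``$SRM_q[3,r]$ is invariant'' is left implicit. You instead take $K$ as given and supply the two ingredients the paper glosses over: a hands-on verification that $K$ is a subgroup of $GL(3,q)$ via the normal form $S(a,b)=(a-b)I+bJ$ (with $\det S(a,b)=(a-b)^2(a+2b)$ explaining the conditions $a\neq b$, $a\neq-2b$), and---the real added content---the descent from polynomial-level invariance to a code automorphism: showing $A\Delta=\Delta$, that $A$ permutes the $\sim$-classes of $\Omega_q(3)$, and that the induced map on codewords is a monomial matrix $M_A$ with $\pm1$ entries coming from the alternating property, so that one lands in $MAut(SRM_q[3,r])$ rather than merely permuting coordinates. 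What each approach buys: the paper's computation additionally establishes that $K$ is \emph{maximal} in $GL(3,q)$ with this property (a stronger conclusion than the theorem states, at the cost of relying on computer algebra); your argument proves exactly the stated claim by hand and makes rigorous the quotient/sign bookkeeping that the paper only gestures at through the analogy with Example~\ref{Abbe}. Your closing caveat that $A\mapsto M_A$ is not injective (even permutation matrices act trivially) is a fair and honest reading of the phrase ``contains a subgroup isomorphic to $K$,'' which the paper itself does not address.
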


\begin{proof}
By the definition of $E_q(3,r)$, the members of degrees $3$ and $4$ in $E_q(3,r)$ are the sets 
\[
P=\biggl\{ a_{012}\begin{vmatrix}
        1 & 1 & 1\\ 
       x_1 & x_2 & x_3 \\
       x_1^2 & x_2^2 & x_3^2
    \end{vmatrix} \Bigg|\, a_{012} \in \mathbb{F}_q\biggl\}
\]
and     
\[Q= \biggl\{ a_{013}\begin{vmatrix}
        1 & 1 & 1\\ 
        x_1 & x_2 & x_3 \\
        x_1^3 & x_2^3 & x_3^3
    \end{vmatrix} \Bigg|\, a_{013} \in \mathbb{F}_q\biggl\},
\]
respectively. Let $B\in\mathbb{F}_q^{3\times 3}$ be an invertible matrix such that  $B = \begin{bmatrix}
      a & b & c\\
      d & e & f \\
      g & h & i \\
  \end{bmatrix}$. It is clear that when $f\in P$ and $g\in Q$, the following conditions must hold
\begin{equation*}
T_f(B)\in P\qquad\textnormal{and}\qquad T_g(B)\in Q.
\end{equation*}

Thus, by the properties of the sets $P$ and $Q$, the coefficients of the terms 
\[
x_1^3,\,\, x_2^3,\,\, x_3^3,\,\, x_1x_2x_3,\,\, x_1^4,\,\, x_2^4,\,\, x_3^4,\,\, x_1x_2x_3^2, \,\, x_1x_2^2x_3,\,\, x_1^2x_2x_3,\,\, x_1^2x_2^2,\,\, x_1^2x_3^2,\,\, x_2^2x_3^2
\]
in the outputs, after applying the corresponding transformations $T_f(B)$ and $T_g(B)$ must be zero. Furthermore, in the same outputs, the sum of the coefficients of the terms in the pairs 
\[
(x_1x_2^2,x_1^2x_2),\,\, (x_1x_3^2,x_1^2x_3),\,\, (x_2x_3^2, x_2^2x_3),\,\, (x_1x_3^3,x_1^3x_3),\,\, (x_1x_2^3,x_1^3x_2),\,\, (x_2x_3^3, x_2^3x_3) 
\]
must individually be zero. So we have $19$ different equations over $\mathbb{F}_q$. When we solve these equations for the unknowns $a,b,c,d,e,f,g,h,i$ with the help of the computer algebra system SageMath, we get the solution set $S$ as 
\[
 \left\{ \begin{bmatrix}
        a & b & b \\
        b & a & b \\
        b & b & a
    \end{bmatrix},
    \begin{bmatrix}
        a & b & b \\
        b & a & b \\
        b & b & a
    \end{bmatrix},
    \begin{bmatrix}
        a & b & b \\
        b & b & a \\
        b & a & b
    \end{bmatrix},
    \begin{bmatrix}
        b & a & b \\
        a & b & b \\
        b & b & a
    \end{bmatrix},
    \begin{bmatrix}
        b & a & b \\
        b & b & a \\
        a & b & b
    \end{bmatrix},
    \begin{bmatrix}
        b & b & a \\
        b & a & b \\
        a & b & b
    \end{bmatrix},
    { \begin{bmatrix}
        b & b & a \\
        a & b & b \\
        b & b & a
    \end{bmatrix}}\bigg |\ a,b \in \mathbb{F}_q \right\}.
\]
For the invertibility for each element of the above set, $a,b\in\mathbb{F}_q$ satisfies the conditions $a\neq b,\,-2b$. Thus the solution set will be the set $K\subset GL(3,q)$. By combining this with Lemma \ref{lemma5}, the set $K$ is the maximal set in $GL(3,q)$ such that $E_q(3,r)$ is invariant under the transformations come from $K$.
\end{proof}

We give examples of $SRM_q[2,r]$ and $SRM_q[3,r]$ for some $q$, $r$ values, respectively.

\begin{ex}
Let $q = 5$, $n=2$, $r=4$ and $(i_1,i_2) \in \{(0,1), (0,2), (0,3), (0,4), (1,2), (1,3), \linebreak  (1,4), (2,3), (2,4), (3,4) \}$. 
For a matrix $\begin{bmatrix} 
a & c \\ 
b & d
\end{bmatrix}$
$\in \Biggl\{
\begin{bmatrix} 
1 & 0 \\ 
0 & 1
\end{bmatrix}, \begin{bmatrix} 
0 & 1 \\ 
1 & 0
\end{bmatrix}, \begin{bmatrix} 
0 & 2 \\ 
2 & 0 
\end{bmatrix}, \begin{bmatrix} 
2 & 0 \\ 
0 & 2
\end{bmatrix}, \begin{bmatrix} 
0 & 3 \\ 
3 & 0 
\end{bmatrix},  $

$\begin{bmatrix} 
3 & 0 \\ 
0 & 3
\end{bmatrix}, \begin{bmatrix} 
4 & 0 \\ 
0 & 4
\end{bmatrix}, \begin{bmatrix} 
0 & 4 \\ 
4 & 0
\end{bmatrix}, \begin{bmatrix} 
1 & 2 \\ 
2 & 1
\end{bmatrix}, \begin{bmatrix} 
2 & 1 \\ 
1 & 2
\end{bmatrix}, \begin{bmatrix} 
1 & 3 \\ 
3 & 1
\end{bmatrix}, \begin{bmatrix} 
3 & 1 \\ 
1 & 3
\end{bmatrix}, \begin{bmatrix} 
2 & 4 \\ 
4 & 2
\end{bmatrix},
\begin{bmatrix} 
4 & 2 \\ 
2 & 4
\end{bmatrix}, \begin{bmatrix} 
3 & 4 \\ 
4 & 3
\end{bmatrix}, \begin{bmatrix} 
4 & 3 \\ 
3 & 4
\end{bmatrix}\Biggr\}$ 
and $\alpha \in \mathbb{F}_5$, the following equation holds:
$$\alpha{ \Bigl[(x_1^{i_1}x_2^{i_2})-(x_1^{i_2}x_2^{i_1})\Bigr]} ={ (ax_1+bx_2)^{i_1}(cx_1 +dx_2)^{i_2} - (ax_1+bx_2)^{i_2}(cx_1 +dx_2)^{i_1}}$$

\end{ex}
\begin{ex}
    Let $q = 7$, $ n = 3$ and $r = 5$. Then $(i_1,i_2, i_3) \in \{(0,1,2), (0,1,3), (0,1,4), (0, 2, 3)\}$. 
    Let
    \begin{align*}
    g_1(x_1, x_2, x_3)&=x_1x_2^2+x_2x_3^2+x_1^2x_3-x_1^2x_2-x_2^2x_3-x_1x_3^2,\\
    g_2(x_1, x_2, x_3)&=x_1x_2^3+x_2x_3^3+x_1^3x_3-x_1^3x_2-x_2^3x_3-x_1x_3^3,\\
    g_3(x_1, x_2, x_3)&=x_1x_2^4+x_2x_3^4+x_1^4x_3-x_1^4x_2-x_2^4x_3-x_1x_3^4,\\
    g_4(x_1, x_2, x_3)&=x_1^2x_2^3+x_2^2x_3^3+x_1^3x_3^2-x_1^3x_2^2-x_2^3x_3^2-x_1^2x_3^3,
    \end{align*}
    and
    \[
    \mathcal{A}=\left\{ P \begin{bmatrix}
        a & b & b \\
        b & a & b \\
        b & b & a
    \end{bmatrix}
    \Bigg|\, P \in \mathcal{P}_3, a,b \in \mathbb{F}_7, a\neq b,\, a\neq -2b\right\}.
    \]
    Then
    \[
    E_7(3,5)=\{a_1g_1+a_2g_2+a_3g_3+a_4g_4\mid a_1,a_2,a_3,a_4\in \mathbb{F}_7\}.
    \]
    For the matrix $ K \in  \mathcal{A}$ and $g\in E_7(3,5)$, the polynomial $T_g(K)\in E_7(3,5)$ by Theorem \ref{main:n=3}. Since the $SRM_7[3,5]$ code is a type of polynomial evaluation codes, as in Example \ref{Abbe}, $SRM_7[3,5]$ is invariant under the corresponding transformations come from the $ \mathcal{A}$.
\end{ex}

Note that the determinants of matrices in solution set must be non-zero. For example, the solution set on $\mathbb{F}_5$ for $n=2$ does not include the element $\begin{bmatrix} 
1 & 4 \\ 
4 & 1
\end{bmatrix}$ because its determinant is zero. Additionally, as the parameter $n$ increases, the values of $q$ and $r$ should be adjusted accordingly.

\subsection{The general case}
Determining all transformations under 
 $GL(n,q)$ that leaves the $SRM$ code invariant for a general $n$ is quite challenging. For this, there needs to be a general method to identify such transformation. Nevertheless, we can predict the solution set that leaves the $SRM$ code invariant for a general $n$ and is a subgroup of the affine linear group. However, we have not established that this set may include all possible linear transformations under which $SRM$ codes are invariant. We leave the task of finding a generalized method for this problem for future work.

\section{Conclusion}\label{sec : Conc}
Our work aims at determining the set of affine-invariant transformations.
The linear automorphism groups of $SRM$ for $n = 2$ and $n = 3$ over the field $\mathbb{F}_p$, where $p$ is any prime number is proven in this paper.
For $n = 2$ and $n = 3$, we find that the exact set generated by transformations remaining $SRM$ codes invariant is a subgroup of the affine linear group. For different values of $n$, different techniques were used to determine this set. 
Therefore, we could not give a general proof for an arbitrary $n$, and leave it an open problem of the complete determination of the automorphism group $Aut(SRM)$ for any $n > 3$. We state our conjecture below.


\begin{conjecture}
Let $J_n$ be the $n \times n$ all one matrix, $I_n$ be the $n \times n$ identity matrix and $\mathcal{P}_n$ be the set of permutations of order $n$. Let $M$ be a subset of $GL(n,q)$ defined as
$M = \left\{ P((b-a)I_n + aJ_n) 
    \,|\, P \in \mathcal{P}_n, a,b \in \mathbb{F}_q, a\neq b, a\neq (1-n)b\right \}\subset GL(n,q)$. Then, the automorphism group of the $SRM_q[n,r]$ for $q>r>\frac{n(n-1)}{2}$ contains a subgroup isomorphic to $M$, i.e., $SRM_q[n,r]$ is invariant under the transformations in $M$.
\end{conjecture}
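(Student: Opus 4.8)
\medskip
\noindent\textbf{Proof proposal for the Conjecture.} The plan is to run, for general $n$, the same two-part strategy already used for $n=2$ and $n=3$: first describe $E_q(n,r)$ structurally, then observe that every $A\in M$ induces a substitution preserving it, and finally transfer invariance to the code as in Example~\ref{Abbe}. The first task is to prove the $n$-variable analogue of Lemma~\ref{lemman=3}: if $f\in\mathbb{F}_q[x_1,\dots,x_n]$ has $\deg f\le r<q$ and satisfies $f(x_{\pi(1)},\dots,x_{\pi(n)})=\operatorname{sgn}(\pi)f(x_1,\dots,x_n)$ for every $\pi\in S_n$, then $f\in E_q(n,r)$; the reverse inclusion is immediate from the definition, so $E_q(n,r)$ is exactly the space of such \emph{alternating} polynomials of degree at most $r$. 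Under the hypotheses $q>r>\tfrac{n(n-1)}{2}$ (with $n\ge2$) one has $q\ge3$, hence $\operatorname{char}\mathbb{F}_q\neq2$, and the argument of Lemma~\ref{lemman=3} goes through: reduce to $f$ homogeneous of degree $t\le r$; if a monomial $x_1^{\alpha_1}\cdots x_n^{\alpha_n}$ occurring in $f$ has two equal exponents, a transposition fixing that monomial forces (via $2\neq0$) its coefficient to vanish; otherwise sort the exponents to $0\le i_1<\dots<i_n$, use the alternating property to pin down the entire $S_n$-orbit of the monomial, conclude $f$ contains $A_0\det(x,i)$ for a scalar $A_0$, subtract $A_0\det(x,i)$, and induct on the finite number of exponent orbits. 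Because $i_1+\dots+i_n\le t\le r$ and $i_n\le r-\binom{n-1}{2}<q$, each $\det(x,i)$ produced genuinely lies in $E_q(n,r)$. This is just the classical fact that alternating polynomials are spanned by generalized Vandermonde determinants; the only work is threading the degree and exponent bounds through the induction.

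Next I would show $E_q(n,r)$ is invariant under $M$. Write $A=P_\tau M_0$ with $M_0=(b-a)I_n+aJ_n$ and $P_\tau$ the permutation matrix of $\tau\in S_n$, and put $S=x_1+\dots+x_n$, so $T(A)$ replaces $x_k$ by $(b-a)x_{\tau(k)}+aS$. For $f\in E_q(n,r)$ set $g=T_f(A)$ and $y_k=(b-a)x_k+aS$, so that $g(x_1,\dots,x_n)=f(y_{\tau(1)},\dots,y_{\tau(n)})$. Since $S$ is symmetric, the substitution $x_i\mapsto x_{\sigma(i)}$ fixes $S$ and sends $y_k$ to $y_{\sigma(k)}$; hence $g(x_{\sigma(1)},\dots,x_{\sigma(n)})=f(y_{(\sigma\tau)(1)},\dots,y_{(\sigma\tau)(n)})=\operatorname{sgn}(\sigma\tau)f(y_1,\dots,y_n)=\operatorname{sgn}(\sigma)g(x_1,\dots,x_n)$, using that $f$ is alternating in its $n$ arguments. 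Thus $g$ is alternating, and $\deg g=\deg f\le r$ because $A$ is invertible (note $\det M_0=(b-a)^{n-1}\bigl(b+(n-1)a\bigr)$, nonzero under the conditions imposed on $a,b$). By the lemma of the previous paragraph, $g\in E_q(n,r)$; so $T(A)$ maps $E_q(n,r)$ into, and hence by finite-dimensionality onto, itself. This is the exact analogue of Lemma~\ref{lemma5}, with the symmetry of $S$ replacing the case-by-case permutation verification done there for $n=3$.

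It remains to assemble the group and move to the code. The $M_0(a,b)$ are simultaneously diagonalizable and satisfy $J_n^2=nJ_n$, so they form a commutative group $D$ (closed under products, $I_n=M_0(0,1)\in D$); every permutation matrix commutes with $J_n$ and hence centralizes $D$, and $\mathcal{P}_n\cap D=\{I_n\}$, so $M=\mathcal{P}_nD$ is a subgroup of $GL(n,q)$, in fact $M\cong D\times S_n$. Each $A\in M$ preserves the set $\Delta$ of points with pairwise distinct coordinates, since $(M_0\alpha)_k-(M_0\alpha)_l=(b-a)(\alpha_k-\alpha_l)$ and $P_\tau$ merely permutes coordinates; moreover $A(\sigma\cdot\alpha)=P_\tau P_\sigma M_0\alpha=(\tau\sigma\tau^{-1})\cdot(A\alpha)$, so $\alpha\mapsto A\alpha$ descends to a bijection of $\Omega_q(n)=\Delta/{\sim}$. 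Fixing a representative of each class, the codeword of $f\in E_q(n,r)$ is $\bigl(f(\mathrm{rep})\bigr)_{[\alpha]}$; exactly as in Example~\ref{Abbe} one checks that the codeword of $T_f(A)$ arises from that of $f$ by the coordinate permutation $[\alpha]\mapsto[A\alpha]$, up to a sign $\operatorname{sgn}(\sigma)$ in each coordinate coming from the alternating property whenever $A$ sends the chosen representative of $[\alpha]$ to a non-chosen representative of $[A\alpha]$. Hence $A$ induces a signed-permutation (monomial) map $\rho_A$ of $\mathbb{F}_q^{|\Omega_q(n)|}$ that is independent of $f$, and by the previous paragraph $\rho_A\bigl(SRM_q[n,r]\bigr)=SRM_q[n,r]$. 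Since $q$ is prime, $MAut(C)=Aut(C)$ for every code $C$, so $\rho_A\in Aut(SRM_q[n,r])$; the map $A\mapsto\rho_A$ is a homomorphism, injective (for instance by testing on the class of $(0,1,\dots,n-1)\in\Delta$), so its image is a subgroup of $Aut(SRM_q[n,r])$ isomorphic to $M$, which is the assertion.

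The one genuinely nonroutine step is the first: its $n=3$ instance is Lemma~\ref{lemman=3}, already a page-long orbit-peeling argument, and for arbitrary $n$ the bookkeeping over $S_n$-orbits of exponent vectors together with the degree and exponent bounds is where the real writing lies, even though the statement itself is classical. I should also stress that the Conjecture as phrased asserts only \emph{containment} of a copy of $M$, which is what the plan above delivers; the hard, still-open question — which the surrounding discussion really points at, and which matches the stronger claims proved for $n=2,3$ — is \emph{maximality}: that no $A\in GL(n,q)\setminus M$ preserves $SRM_q[n,r]$. Proving that would require, as in the $n=2,3$ proofs, extracting the low-degree part of $E_q(n,r)$, writing out the resulting system of polynomial equations in the $n^2$ entries of a candidate matrix, and solving it in closed form \emph{uniformly} in $n$ rather than by computer for each fixed $n$; producing such a uniform solution is, to my mind, the real difficulty.
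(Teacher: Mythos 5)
The paper does not actually prove this statement: it is left as a conjecture, accompanied only by the remark that the invariance of $SRM_q[n,r]$ under $M$ ``may be similarly done to the proof of Lemma \ref{lemma5}'', while completeness of $M$ is acknowledged to be out of reach. Your plan is therefore not competing with an existing proof but filling a gap the authors deliberately left open, and it does so along exactly the lines they suggest: characterize $E_q(n,r)$ as the alternating polynomials of degree at most $r$ (the $n$-variable analogue of Lemma \ref{lemman=3}), deduce invariance of $E_q(n,r)$ under $M$, and descend to the code. Two points in your favour: the observation that $x_i\mapsto x_{\sigma(i)}$ fixes $S=x_1+\cdots+x_n$ and hence permutes the linear forms $y_k=(b-a)x_k+aS$ replaces the case-by-case verification of Lemma \ref{lemma5} by a one-line computation valid for all $n$ and all $\sigma\in S_n$ simultaneously, which is a genuine improvement over the paper's $n=3$ argument; and your treatment of the sign ambiguity in evaluating an alternating $f$ over $\Omega_q(n)$ (the induced map $\rho_A$ is monomial rather than a pure coordinate permutation, which is harmless since $MAut=Aut$ for prime $q$) addresses a point the paper handles only implicitly, via the upper-triangular convention in the case $n=2$.

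Two caveats. First, as you concede, the $n$-variable generalization of Lemma \ref{lemman=3} carries essentially all the mathematical weight and you only sketch it; the sketch is right (it is the classical spanning of alternating polynomials by generalized Vandermonde determinants in characteristic different from $2$, and the bound $i_n\le r-\binom{n-1}{2}<q$ correctly keeps every resulting determinant inside $E_q(n,r)$), but the orbit-peeling induction would have to be written out in full. Second, the nonsingularity discussion: with $M_0=(b-a)I_n+aJ_n$ the diagonal entries are $b$ and the off-diagonal entries are $a$, so $\det M_0=(b-a)^{n-1}\bigl(b+(n-1)a\bigr)$ and the correct condition is $b\neq(1-n)a$, not $a\neq(1-n)b$ as the conjecture states (compare the $n=3$ set $K$, where the diagonal entry is $a$ and the condition is $a\neq-2b$); this is an inconsistency in the conjecture itself, but your assertion that the determinant is nonzero ``under the conditions imposed'' takes the stated conditions at face value and is false as literally written, so you should flag and repair the swap rather than inherit it. Finally, the injectivity of $A\mapsto\rho_A$ deserves more than a parenthetical remark if the conclusion is to be a subgroup isomorphic to $M$ itself rather than to a quotient of $M$.
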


The proof of the invariance of $SRM_q[n,r]$ codes under the transformations, which come from the set $M$, may be similarly done to the proof of Lemma \ref{lemma5}. Notwithstanding, in order to show that the set $M$ is the complete set in this manner is quite challenging to follow the same techniques.
\subsection*{Acknowledgments:} The first author gratefully acknowledges the support she has received from The Scientific and Technological Research Council of Turkey (TUBITAK) with Grant No. 2211-A and The Council of Higher Education (YÖK) 100/2000 program. 
\bibliographystyle{plain}
\bibliography{sk_msthesis.bib}

\end{document}